\newtheorem{thm}{Theorem}[section]
\newtheorem{prop}[thm]{Proposition}
\newtheorem*{prob*}{Problem}
\newtheorem*{thm*}{Theorem}
\theoremstyle{definition}
\newtheorem*{defn*}{Definition}
\newtheorem*{rem*}{Remark}
\numberwithin{equation}{section}
\newcommand{\C}{\mathbb C}
\newcommand{\R}{\mathbb R}
\DeclareMathOperator{\GAMMA}{Gamma}
\DeclareMathOperator{\per}{per}
\DeclareMathOperator{\gam}{Gamma} \DeclareMathOperator{\sgn}{sgn}
\DeclareMathOperator{\Prob}{Prob}
\begin{document}
\title[Eigenvalues of products of complex Gaussian matrices]
 {\bf{Hole probabilities and overcrowding estimates for  products of complex Gaussian matrices}}

\author{Gernot Akemann}
\address{Department of Physics,  Bielefeld University,  Postfach 100131, D-33501, Bielefeld, Germany} \email{akemann@physik.uni-bielefeld.de}

\author{Eugene Strahov}
\address{Department of Mathematics, The Hebrew University of
Jerusalem, Givat Ram, Jerusalem
91904}\email{strahov@math.huji.ac.il}

\thanks{The first author (G.~A.) is partly supported by the SFB$|$TR12 ``Symmetries
and Universality in Mesoscopic Systems'' of the German research council
DFG. The second author (E.~S.) is supported in part by the US-Israel Binational Science
Foundation (BSF) Grant No.\ 2006333,
 and by the Israel Science Foundation (ISF) Grant No.\ 1441/08.\\ }

\keywords{Non-Hermitian random matrix theory, products of random matrices, determinantal processes, generalized Ginibre ensembles, hole  probabilities, overcrowding}

\commby{}
\begin{abstract}
We consider eigenvalues of a product of $n$ non-Hermitian, independent random matrices.
Each matrix in this product is of size $N\times N$ with independent standard  complex Gaussian variables.
The eigenvalues of such a product form a determinantal point process on the complex plane (Akemann and Burda \cite{Akemann1}), which can be understood
as a generalization of the finite Ginibre ensemble. As $N\rightarrow\infty$, a generalized infinite Ginibre ensemble arises.
We show that the set of absolute values of the points of this determinantal process has
the same distribution as $\{R_1^{(n)},R_2^{(n)},\ldots\}$, where $R_k^{(n)}$ are independent, and $\left(R_k^{(n)}\right)^2$ is distributed as the product of $n$ independent Gamma variables $\gam(k,1)$.
This enables us to find the asymptotics  for the hole  probabilities, i.e. for the probabilities of the events that there are no points of the process in a  disc of radius $r$ with its center at $0$, as $r\rightarrow\infty$. In addition, we solve the relevant overcrowding problem: we derive an asymptotic formula
for the probability that there are more than $m$ points of the process in a fixed disk of radius $r$ with its center at $0$, as $m\rightarrow\infty$.

\end{abstract}
\maketitle
\section{Introduction}
Products of random matrices are used in different areas of research. For example, the book by Crisanti, Paladin and Vulpiani \cite{Crisanti} describes applications of products of random matrices in  statistical mechanics of disordered systems, localization, wave propagation in random media, and chaotic dynamical systems.
For an application of such products to the study of compositions of random quantum operations we refer the reader to the paper by Roga,  Smaczy$\acute{\mbox{n}}$ski and
$\dot{\mbox{Z}}$yczkowski \cite{Roga}. A paper by Osborn \cite{Osborn}  considers products of random matrices in the context of Quantum Chromodynamics.
For applications to Schr$\ddot{\mbox{o}}$dinger operators see the book by  Bougerol and Lacroix \cite{Bougerol}.

In the 1960's and 70's  different fundamental probabilistic results on  products of random matrices were obtained.
In particular, the asymptotic behavior of products of independent random matrices was investigated in the work by Furstenberg and Kesten \cite{Furstenberg}.
However,  in the 1960's and 70's spectral aspects of  products of random matrices did not attract any serious attention of mathematicians working in the field.
Only  recently a number of works appeared  in which eigenvalue distributions of products of random matrices, in the limit of large  matrices,
were considered (see, for example,  Burda,  Nowak, Jarosz, Livan and Swiech \cite{Burda1,Burda2}, Burda,  Janik, and Waclaw \cite{Burda3}, G\"otze and Tikhomirov \cite{Gotze},
Penson and Zyczkowski \cite{Penson}, O'Rourke and Soshnikov \cite{Soshnikov}, Forrester \cite{Forrester1}), and where
products of random matrices were studied by  usual methods of Random Matrix Theory.  We refer the reader to the books by
Anderson,  Guionnet and Zeitouni \cite{Anderson}, Deift \cite{Deift}, Forrester \cite{Forrester0},  and Pastur and Shcherbina \cite{Pastur} for an introduction to Random Matrix Theory, and for the description of its basic methods and results.

In this article we concentrate on \textit{radial} distributions of eigenvalues of products of complex \textit{non-Hermitian} independent random matrices. For  a properly normalized product
 of complex \textit{non-Hermitian} independent random matrices  O'Rourke and Soshnikov \cite{Soshnikov} showed (under certain assumptions on the entries of the random matrices) that the empirical spectral distribution of the eigenvalues  converges to the limiting
distribution, which is a power of the circular law. Forrester \cite{Forrester1} derived a formula for the Lyapunov exponents for a product of complex Gaussian matrices.
The starting point of the present research is the result obtained in Akemann and Burda \cite{Akemann1}. They considered the product of $n$ complex non-Hermitian, independent random
matrices, each of size $N\times N$  with independent identically distributed Gaussian entries (Ginibre matrices). It was shown that the eigenvalues of such a product
form a complex determinantal point process which can be understood as a generalization of the classical Ginibre ensemble.
It is the aim of the present paper  to study in detail this complex determinantal process (which in this paper is called the \textit{generalized finite-$N$ Ginibre ensemble} with  parameter $n$), and its infinite analogue (which in this paper is called hereafter the \textit{generalized infinite Ginibre ensemble} with  parameter $n$).
 We show that the set of absolute values of the points of the generalized Ginibre ensemble has
the same distribution as $\{R_1^{(n)},R_2^{(n)},\ldots\}$, where $R_k^{(n)}$ are independent, and $\left(R_k^{(n)}\right)^2$ is distributed as the product of $n$ independent Gamma variables $\gam(k,1)$,
see Theorem \ref{TheoremMainResult1} and Theorem \ref{TheoremmainResult2}.
This enables us to find the asymptotics  for the hole  probabilities, i.e. for the probabilities of the events  that there are no points of the process in a  disc of radius $r$ with its center at $0$, as $r\rightarrow\infty$, both for the generalized finite-$N$ and for the generalized infinite  Ginibre ensembles (Theorem \ref{TheoremGapProbabilities}).
In addition, we solve the relevant overcrowding problem: we derive an asymptotic formula
for the probability of the event that there are more than $m$ points of the generalized infinite  Ginibre ensemble in a fixed disk of radius $r$ with its center at $0$, as $m\rightarrow\infty$, see Theorem \ref{TheoremOvercrowdingProblem}.
In proving these Theorems we apply a technique similar to that developed in Kostlan \cite{Kostlan}, Krishnapur \cite{Krishnapur} and Hough, Krishnapur, Peres and Vir$\acute{\mbox{a}}$g \cite{Hough} for the case of the classical Ginibre ensemble, and for the case of random analytic functions.

This paper is organized as follows. In Section \ref{SectionGeneralizedGinibre} we describe the result obtained  in Akemann and Burda \cite{Akemann1}, and define explicitly the relevant determinantal process.  In Section 3 we state  main results of this paper.
 Theorem \ref{TheoremMainResult1} and Theorem \ref{TheoremmainResult2} describe  the distribution of absolute values of the points of the generalized Ginibre ensembles,
 Theorem \ref{PropositionExactFormulae} gives an exact formula for the hole probabilities,
 Theorem \ref{TheoremGapProbabilities} concerns the rate of  the decay of the  hole probabilities, and  Theorem \ref{TheoremOvercrowdingProblem} solves the relevant overcrowding problem.
 The rest of the paper is devoted to proofs of these results.\\
 \textbf{Acknowledgements.}
 Part of this research was conducted during ZIF research program  "Stochastic Dynamics: Mathematical Theory and Applications".
 It is our pleasure to thank the Center for Interdisciplinary Research (ZIF) of Bielfeld University for
hospitality, and the organizers of the ZIF Research Group 2012
"Stochastic Dynamics: Mathematical Theory and Applications"
for the stimulating and encouraging environment they created at the program.

\section{Products of random matrices and generalized Ginibre ensembles}\label{SectionGeneralizedGinibre}
In this article we consider the product $P_n$ of $n$ independent random matrices,
$$
P_n=X_1X_2\ldots X_n.
$$
Each matrix $X_j$, $j=1,\ldots,n$, is of size $N\times N$, and with
i.i.d standard complex Gaussian entries.
Let $z_1$,$\ldots$,$z_N$ be the eigenvalues of $P_n$. We study some statistical properties of the distribution of the eigenvalues $z_1$,$\ldots$,$z_N$
in the complex plane, both for finite and for large $(N\rightarrow\infty)$ matrices. More explicitly, the starting point
of the present work is  the following result obtained recently by  Akemann and Burda \cite{Akemann1}.

Assume that $|z_1|\leq\ldots\leq |z_N|$.
Then the joint density of $(z_i)_{i=1,\ldots,N}$ with respect to Lebesgue measure on $\C^N$ is given by
\begin{equation}\label{AkemannBurdaJointDensity}
\rho_N^{(n)}(z_1,\ldots,z_N)=\left(\frac{1}{\pi^N\prod_{k=1}^{N}\Gamma(k)}\right)^n\prod\limits_{k=1}^Nw_n(z_k)\prod\limits_{1\leq i<j\leq N}|z_i-z_j|^2,
\end{equation}
where
$$
w_n(z)=\pi^{n-1}G_{0,n}^{n,0}\left(|z|^2\biggl|\begin{array}{cccc}
0, & 0, & \ldots, & 0
\end{array}
\right).
$$
Here $G_{0,n}^{n,0}\left(|z|^2\biggl|\begin{array}{cccc}
0, & 0, & \ldots, & 0
\end{array}
\right)$ stands for Meijer's $G$-function with suitable choice of parameters.
For Meijer's $G$-functions we adopt the same notation and definitions as in Luke \cite{Luke},  Gradshtein and Ryzhik \cite{Gradshtein}.
Namely, the Meijer G-function
$G_{p,q}^{m,n}\left(x\biggl|\begin{array}{cccc}
                                                                a_1, & a_2, & \ldots, & a_p \\
                                                                b_1, & b_2, & \ldots, & b_q
                                                              \end{array}
\right)$
is defined as
\begin{equation}
G_{p,q}^{m,n}\left(z\biggl|\begin{array}{cccc}
                                                                a_1, & a_2, & \ldots, & a_p \\
                                                                b_1, & b_2, & \ldots, & b_q
                                                              \end{array}
\right)=\frac{1}{2\pi i}\int\limits_{C}\frac{\prod_{j=1}^m\Gamma(b_j-s)\prod_{j=1}^n\Gamma(1-a_j+s)}{\prod_{j=m+1}^q\Gamma(1-b_j+s)\prod_{j=n+1}^p\Gamma(a_j-s)}z^sds.
\nonumber
\end{equation}
Here an empty product is interpreted as unity, $0\leq m\leq q$, $0\leq n\leq p$, and the parameters
$\{a_k\}$ ($k=1,\ldots,p$) and $\{b_j\}$ ($j=1,\ldots,m$) are such that no pole of $\Gamma(b_j-s)$ coincides with any pole of
$\Gamma(1-a_k+s)$. We assume that $z\in\C\setminus\{0\}$. The contour of integration $C$ goes from $-i\infty$ to $+i\infty$ so that all poles
$\Gamma(b_j-s)$, $j=1,\ldots,m,$ lie to the right of the path, and all poles of $\Gamma(1-a_k+s)$, $k=1,\ldots,n,$ lie to the left of the path.
If $p=0$, then $n=0$, and  we write the corresponding Meijer $G$-function as
$
G_{0,q}^{m,0}\left(x\biggl|\begin{array}{cccc}
 b_1, & b_2, & \ldots, & b_q
\end{array}
\right).
$
In particular, we have
$$
G^{n,0}_{0,n}(t|0,0,\ldots,0)=\frac{1}{2\pi i}\int\limits_{c-i\infty}^{c+i\infty}t^{-s}\Gamma^n(s)ds\;\;(t>0,c>0).
$$
This integral can be evaluated (see Springer and Thompson \cite{Springer0}, Lomnicki \cite{Lomnicki})  by contour integration in the form of an infinite series:
$$
G^{n,0}_{0,n}(t|0,0,\ldots,0)=\sum\limits_{j=0}^{\infty}R(t,n,j),
$$
where $R(t,n,j)$ is the residue of the integrand at the $n$th-order pole
$$
s=-j\;\;(j=0,1,\ldots),
$$
i.e.
$$
R(t,n,j)=\frac{1}{(n-1)!}\frac{d^{n-1}}{ds^{n-1}}\left\{t^{-s}(s+j)^n\Gamma^n(s)\right\}|_{s=-j}.
$$
Specifically, we find
$$
G^{1,0}_{0,1}(t|0,0,\ldots,0)=\sum\limits_{j=0}^{\infty}(-1)^j\frac{t^j}{j!}=e^{-t},
$$
and
$$
G^{2,0}_{0,2}(t|0,0,\ldots,0)=\sum\limits_{j=0}^{\infty}\frac{t^j}{(j!)^2}(-\log t+2\psi(j+1))=2K_0(2\sqrt{t}).
$$
Here $\psi(.)$ is the Euler psi function and $K_0(.)$ is the modified Bessel function of the second kind of zero order, see, for example, Erd$\acute{\mbox{e}}$lyi \cite{Erdelyi}.

Equation (\ref{AkemannBurdaJointDensity}) implies that the eigenvalues of $P_n$ form a determinantal point process\footnote{For a background on determinantal point processes we refer the reader to survey articles by Borodin \cite{Borodin}, and by  Hough, Krishnapur, Peres, and Vir$\acute{\mbox{a}}$g \cite{Hough1}.} on the complex plane with kernel
$$
K^{(n)}_N(z,\xi)=\sum\limits_{k=0}^{N-1}\frac{\left(z\bar{\xi}\,\right)^k}{(k!)^n}
$$
with respect to the background measure $\frac{1}{\pi^n}w_n(z)dm(z)$. Here $dm(z)$ denotes the Lebesgue measure on the complex plane.
The fact that $\frac{1}{\pi^n}w_n(z)dm(z)$ is indeed a probability measure for any positive integer $n$ can be checked using the formula
\begin{equation}\label{MainIntegralFormula}
\begin{split}
\int\limits_{0}^{\infty}t^{\nu-1}G_{p,q}^{m,n}&\left(t\biggl|\begin{array}{cccc}
                                                                a_1, & a_2, & \ldots, & a_p \\
                                                                b_1, & b_2, & \ldots, & b_q
                                                              \end{array}
\right)dt\\
&=\frac{\prod\limits_{j=1}^m\Gamma(b_j+\nu)\prod\limits_{j=1}^n\Gamma(1-a_j-\nu)}{\prod\limits_{j=m+1}^q\Gamma(1-b_j-\nu)\prod\limits_{j=n+1}^p\Gamma(a_j+\nu)},
\end{split}
\end{equation}
see Luke \cite{Luke}, Section 5.6.7.

We will refer to the determinantal point process formed by the eigenvalues of $P_n$ as to the \textit{generalized finite-$N$ Ginibre ensemble} with parameter $n$. The reason is that once there is only one matrix
in the product ($n=1$), we have
$$
w_1(z)=G^{1,0}_{0,1}(|z|^2|0,0,\ldots,0)=e^{-|z|^2},
$$
and equation (\ref{AkemannBurdaJointDensity}) turns into
\begin{equation}
\rho_N^{(n=1)}(z_1,\ldots,z_N)=\frac{1}{\pi^N\prod_{k=1}^{N}\Gamma(k)}\prod\limits_{k=1}^Ne^{-|z_k|^2}\prod\limits_{1\leq i<j\leq N}|z_i-z_j|^2.
\nonumber
\end{equation}
Therefore the determinantal point process formed by eigenvalues of $P_n$  reduces to the classical  Ginibre ensemble at finite $N$, i.e. to the determinant point process
with the kernel
$$
K^{(n=1)}_N(z,\xi)=\sum\limits_{k=0}^{N-1}\frac{\left(z\bar{\xi}\,\right)^k}{k!}
$$
with respect to the background measure $\frac{1}{\pi}e^{-|z|^2}dm(z)$.

If $N=\infty$, then  we call the corresponding determinantal point process the \textit{generalized infinite Ginibre ensemble} with parameter $n$.

\section{Statement of results}
\subsection{The distribution of the moduli of eigenvalues}
Our first result concerns the distribution of the moduli of the eigenvalues of $P_n$.
Recall that gamma variables $\mbox{Gamma}(k,1)$ are those having the following density function
\begin{equation}\label{GammaDensityFunction}
\varrho_k^{(1)}(x)=\left\{
                 \begin{array}{ll}
                   \frac{1}{\Gamma(k)}x^{k-1}e^{-x}, & x\geq 0, \\
                   0, & x<0.
                 \end{array}
               \right.
\end{equation}
\begin{thm}\label{TheoremMainResult1}
Let $P_n=X_1X_2\ldots X_n$  be a product of $n$ independent random matrices. Each matrix $X_j$, $j=1,\ldots,n$, is of size $N\times N$, and with
i.i.d. standard complex Gaussian entries. The set of absolute values of eigenvalues of $P_n$ has the same distribution as the set
$\left\{R_1^{(n)},R_2^{(n)},\ldots,R_N^{(n)}\right\}$, where the random variables $R_1^{(n)},R_2^{(n)},\ldots, R_N^{(n)}$ are independent, and for each $k$, $1\leq k\leq N$, the random variable $\left(R_k^{(n)}\right)^2$ has the same
distribution as the product of $n$ independent and identically distributed gamma variables $\GAMMA(k,1)$.
\end{thm}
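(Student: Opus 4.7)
The plan is to follow the classical Kostlan strategy, adapted from the standard Ginibre case to the rotationally invariant but non-Gaussian weight $w_n$. The input is the joint density (\ref{AkemannBurdaJointDensity}); the output I want is that, after integrating out the angular variables, the density of the unordered collection of moduli factors as the symmetrization of $N$ product densities, one for each independent $R_k^{(n)}$.

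The first step is to pass to polar coordinates $z_k=r_k e^{i\theta_k}$ and expand the Vandermonde determinant as $|\Delta(z)|^2=|\det(z_i^{j-1})|^2=\sum_{\sigma,\tau\in S_N}\sgn(\sigma\tau)\prod_i z_i^{\sigma(i)-1}\bar z_i^{\tau(i)-1}$. The weight $w_n(|z|^2)$ is rotationally invariant, so the integrand in the angular variables is
$$\prod_i r_i^{\sigma(i)+\tau(i)-2}e^{i(\sigma(i)-\tau(i))\theta_i},$$
and integrating each $\theta_i$ over $[0,2\pi)$ kills all terms with $\sigma\neq\tau$. What survives is the permanent
$$\int_{[0,2\pi)^N}|\Delta(z)|^2\prod_k d\theta_k=(2\pi)^N\per\bigl[(r_i^{2(j-1)})_{i,j}\bigr]=(2\pi)^N\sum_{\sigma\in S_N}\prod_i r_i^{2(\sigma(i)-1)}.$$
Combining with the Jacobian $r_k\,dr_k\,d\theta_k$ and rearranging the constants from (\ref{AkemannBurdaJointDensity}) gives, for the symmetric density of $(r_1,\dots,r_N)$ with respect to $\prod_k dr_k$,
$$\tilde\rho(r_1,\dots,r_N)=\sum_{\sigma\in S_N}\prod_{i=1}^N f_{\sigma(i)}(r_i),\qquad f_k(r):=\frac{2}{\pi^{n-1}\Gamma(k)^n}\,r^{2k-1}w_n(r^2).$$

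The second step is to verify that each $f_k$ is a probability density. Via the change of variables $t=r^2$ this amounts to $\int_0^\infty t^{k-1}G^{n,0}_{0,n}(t\mid 0,\dots,0)\,dt=\Gamma(k)^n$, which is exactly the $\nu=k$ case of formula (\ref{MainIntegralFormula}) with $m=n$, $p=0$, and all $b_j=0$. Because $\tilde\rho$ is the permanent of a matrix of independent one-variable densities, it is the joint density of an unordered $N$-tuple of independent random variables $R_1^{(n)},\dots,R_N^{(n)}$ with $R_k^{(n)}$ having density $f_k$; equivalently $T_k:=(R_k^{(n)})^2$ has density
$$g_k(t)=\frac{t^{k-1}}{\Gamma(k)^n}\,G^{n,0}_{0,n}(t\mid 0,\dots,0),\qquad t>0.$$

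The third step identifies this law with the product of $n$ independent $\GAMMA(k,1)$ variables. If $Y_1,\dots,Y_n$ are i.i.d.\ $\GAMMA(k,1)$, then $\mathbb{E}[Y_j^s]=\Gamma(k+s)/\Gamma(k)$, so $\mathbb{E}[(Y_1\cdots Y_n)^s]=\Gamma(k+s)^n/\Gamma(k)^n$. Mellin inversion writes the density of the product as
$$\frac{1}{\Gamma(k)^n}\cdot\frac{1}{2\pi i}\int_{c-i\infty}^{c+i\infty}t^{-s-1}\Gamma(k+s)^n\,ds;$$
the shift $s\mapsto s-k$ on the contour converts the integrand to $t^{k-1}\cdot t^{-s}\Gamma(s)^n$, and the remaining contour integral is precisely $G^{n,0}_{0,n}(t\mid 0,\dots,0)$ as recalled after formula (\ref{MainIntegralFormula}). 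This matches $g_k$ on the nose.

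The only real obstacle is bookkeeping: one must track how the factors $\pi^{-nN}$, $\Gamma(k)^{-n}$, $(2\pi)^N$, and $\pi^{n-1}$ combine so that the constant in front of $\prod_i r_i^{2\sigma(i)-1}w_n(r_i^2)$ is exactly what is needed to make each $f_k$ integrate to one. The integration formula (\ref{MainIntegralFormula}) does double duty here, both normalizing the $f_k$ and carrying out the Mellin inversion that identifies $g_k$ with the density of a product of $n$ independent $\GAMMA(k,1)$ variables.
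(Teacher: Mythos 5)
Your proof is correct and follows essentially the same route as the paper: polar coordinates, angular integration of the squared Vandermonde to produce a permanent of one-variable radial densities (Kostlan's argument), and then identification of the law of each $\left(R_k^{(n)}\right)^2$. The only divergence is that where the paper cites the Springer--Thompson result (Proposition \ref{PropositionSpringerThompson}) to match $\varrho_k^{(n)}$ with the density of a product of $n$ independent $\GAMMA(k,1)$ variables, you rederive that fact directly by Mellin inversion; this is a self-contained substitute for the citation rather than a different method.
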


The next Theorem gives even more explicit information on the set of random variables $\left\{R_1^{(n)},R_2^{(n)},\ldots,R_N^{(n)}\right\}$.
\begin{thm}\label{TheoremmainResult2}
The random variable $\left(R_k^{(n)}\right)^2$  has the density function given by the formula
\begin{equation}\label{equation g_i(x)}
\varrho_k^{(n)}(x)=\left\{
                 \begin{array}{ll}
                   \frac{1}{(\Gamma(k))^n}G_{0,n}^{n,0}\left(x\biggl|\begin{array}{cccc}
                                                                k-1, & k-1, & \ldots, & k-1
                                                              \end{array}
\right), & x\geq 0, \\
                   0, & x<0.
                 \end{array}
               \right.
\end{equation}
\end{thm}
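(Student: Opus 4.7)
By Theorem \ref{TheoremMainResult1} the random variable $\left(R_k^{(n)}\right)^2$ is distributed as a product $Y_k^{(n)}=X_1X_2\cdots X_n$ of $n$ i.i.d.\ $\GAMMA(k,1)$ variables, so it suffices to identify the density of $Y_k^{(n)}$ with the right-hand side of \eqref{equation g_i(x)}. The cleanest route is through the Mellin transform, which converts multiplicative convolution of independent positive random variables into ordinary multiplication, and which interacts with the Meijer $G$-function in exactly the way needed.

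First I would compute the Mellin transform of a single $\GAMMA(k,1)$ density $\varrho_k^{(1)}$ from \eqref{GammaDensityFunction}:
\begin{equation*}
\int_0^\infty x^{s-1}\varrho_k^{(1)}(x)\,dx=\frac{1}{\Gamma(k)}\int_0^\infty x^{s+k-2}e^{-x}dx=\frac{\Gamma(s+k-1)}{\Gamma(k)},
\end{equation*}
valid for $\re(s)>1-k$. Since Mellin transforms turn products of independent positive random variables into products of the corresponding transforms, the Mellin transform of the density of $Y_k^{(n)}$ equals $\bigl(\Gamma(s+k-1)/\Gamma(k)\bigr)^n$. The Mellin inversion formula then yields, for a suitable real constant $c>1-k$,
\begin{equation*}
\varrho_k^{(n)}(x)=\frac{1}{(\Gamma(k))^n}\cdot\frac{1}{2\pi i}\int_{c-i\infty}^{c+i\infty}\Gamma(s+k-1)^n\,x^{-s}\,ds,\qquad x>0.
\end{equation*}

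It then remains to recognize the contour integral as a Meijer $G$-function. Substituting $s\mapsto-s$ and comparing with the definition recalled in the excerpt gives
\begin{equation*}
\frac{1}{2\pi i}\int \Gamma(s+k-1)^n x^{-s}ds=\frac{1}{2\pi i}\int \Gamma((k-1)-s)^n x^{s}ds=G_{0,n}^{n,0}\!\left(x\biggl|\begin{array}{ccc} k-1,&\ldots,&k-1\end{array}\right),
\end{equation*}
which is precisely the formula in \eqref{equation g_i(x)}. A quick sanity check using \eqref{MainIntegralFormula} with $\nu=1$, $m=n$, $p=0$ confirms that $\int_0^\infty \varrho_k^{(n)}(x)\,dx=\Gamma(k)^n/\Gamma(k)^n=1$, so the resulting object is indeed a probability density.

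The routine issues to be handled are standard: choosing a contour in the strip of analyticity of the Mellin transform, justifying term-by-term Mellin inversion (this is immediate since the Gamma density has a Mellin transform that decays rapidly on vertical lines, as does its $n$-th power), and checking that the poles of $\Gamma(s+k-1)^n$, namely at $s=1-k-j$ for $j\in\Zp$, all lie to the left of the contour, so the contour prescription agrees with that for $G_{0,n}^{n,0}$. The only genuinely non-trivial step is matching the two conventions for the Meijer $G$-function (the $s\mapsto -s$ substitution), which is the main thing one has to get right to avoid swapping $k-1$ with $1-k$ or $-k$ in the $b$-parameters.
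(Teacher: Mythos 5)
Your Mellin-transform computation is correct as far as it goes: the transform of the $\GAMMA(k,1)$ density is $\Gamma(s+k-1)/\Gamma(k)$, multiplicative convolution turns this into $\bigl(\Gamma(s+k-1)/\Gamma(k)\bigr)^n$, and after the substitution $s\mapsto-s$ the inversion integral matches the paper's contour convention for $G_{0,n}^{n,0}(x\,|\,k-1,\ldots,k-1)$, poles correctly to the right of the path. What you have actually proved, however, is Proposition \ref{PropositionSpringerThompson} in the special case $b_1=\cdots=b_n=k$ — the Springer--Thompson identity that the paper simply cites. That is a nice self-contained replacement for the citation, but it is not the content of the paper's proof of Theorem \ref{TheoremmainResult2}.

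The structural problem is that your first line invokes Theorem \ref{TheoremMainResult1}, while in the paper the implication runs the other way: Theorem \ref{TheoremMainResult1} is \emph{deduced from} Theorem \ref{TheoremmainResult2} together with Proposition \ref{PropositionSpringerThompson}. The random variables $R_k^{(n)}$ acquire their meaning only through the claim that the set of eigenvalue moduli of $P_n$ is distributed as a set of independent radii, and establishing that claim is the real work: one must start from the joint eigenvalue density \eqref{AkemannBurdaJointDensity}, integrate out the angular variables to obtain the permanental density $\per[r_i^{2j-1}]$ of Proposition \ref{PropositionTheJointDensityOfModuli}, check via \eqref{MainIntegralFormula} that the functions $y\mapsto y^{j-1}w_n(\sqrt{y})/(\pi^{n-1}\Gamma(j)^n)$ are probability densities, use the identity $y^{j-1}G_{0,n}^{n,0}(y|0,\ldots,0)=G_{0,n}^{n,0}(y|j-1,\ldots,j-1)$ to put them in the form \eqref{equation g_i(x)}, and then apply Kostlan's lemma identifying a permanental joint density $\frac{1}{N!}\per[\varrho_j(y_i)]$ with a random permutation of independent variables. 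None of this appears in your proposal. If your argument were substituted for the paper's, both Theorems \ref{TheoremMainResult1} and \ref{TheoremmainResult2} would rest on each other and neither would be connected to the eigenvalues of $P_n$. To repair it, either prove the independence-and-density statement for the moduli directly as above (after which your Mellin calculation cleanly supplies the product-of-Gammas identification needed for Theorem \ref{TheoremMainResult1}), or state explicitly that you are only proving the distributional identity for products of Gamma variables and that the reduction of the eigenvalue moduli to independent radii must be established separately.
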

{\bf Remarks.} \\
(a) Theorem \ref{TheoremMainResult1} generalizes the result obtained by Kostlan \cite{Kostlan} in the case of the classical Ginibre ensemble ($n=1$).\\
(b) The generalized finite-$N$ Ginibre ensemble  with  parameter $n$ is  a determinantal process on $\C$ with the kernel of the form
$K(z,\xi)=\sum\limits_{k}c_k(z\bar\xi\,)^k$ (where $c_k$ are some coefficients)   with respect to a radially symmetric measure. It is a known general fact (see Hough, Krishnapur, Peres and Vir$\acute{\mbox{a}}$g \cite{Hough}, Section 4.7) that the set of absolute values of the points  for such processes  has the same distribution as a set of independent random variables.
Theorem \ref{TheoremMainResult1} and Theorem \ref{TheoremmainResult2} describe this set of independent random variables explicitly.\\
(c) Theorem \ref{TheoremmainResult2} is closely related to the following result on the distribution of a product
of independent gamma variables (see Springer and Thompson \cite{Springer}, Section 3).
\begin{prop}\label{PropositionSpringerThompson}
Let $x_1$, $x_2$, $\ldots$, $x_n$ be $n$ independent gamma variables having density functions
$$
f_k(x_k)=\left\{
           \begin{array}{ll}
             \frac{1}{\Gamma(b_k)}x_k^{b_k-1}e^{-x_k}, & x_k\geq 0, \\
             0, & x_k<0,
           \end{array}
         \right.
$$
where $b_k>0$, $1\leq k\leq n$. Then the probability density function $g(z)$ of the product $z=x_1x_2\ldots x_n$ is  Meijer's $G$-function multiplied
by a normalizing constant, i.e.
$$
g(z)=\frac{1}{\prod\limits_{i=1}^n\Gamma(b_i)}G_{0,n}^{n,0}(z|b_1-1,b_2-1,\ldots,b_n-1).
$$
\end{prop}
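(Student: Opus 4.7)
The plan is to compute the density of the product by Mellin transforms and identify the result with the Meijer $G$-function via its Mellin--Barnes representation.

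First I would compute, for each $k$, the Mellin transform of the density of $x_k$. A direct calculation gives
\begin{equation*}
\mathbb{E}\bigl[x_k^{s-1}\bigr]=\int_0^\infty x^{s-1}\,\frac{1}{\Gamma(b_k)}x^{b_k-1}e^{-x}\,dx=\frac{\Gamma(b_k+s-1)}{\Gamma(b_k)},
\end{equation*}
valid for $\re s>1-b_k$. Since $x_1,\ldots,x_n$ are independent, the Mellin transform of the density of the product $z=x_1\cdots x_n$ factors:
\begin{equation*}
\mathbb{E}\bigl[z^{s-1}\bigr]=\prod_{k=1}^n\frac{\Gamma(b_k+s-1)}{\Gamma(b_k)},
\end{equation*}
which is absolutely convergent in any vertical strip $\re s>1-\min_k b_k$.

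Next I would invert the Mellin transform. Standard Mellin inversion (the required decay along vertical lines follows from Stirling's formula applied to the ratio of gamma functions) yields
\begin{equation*}
g(z)=\frac{1}{\prod_{k=1}^n\Gamma(b_k)}\cdot\frac{1}{2\pi i}\int_{c-i\infty}^{c+i\infty}z^{-s}\prod_{k=1}^n\Gamma(b_k+s-1)\,ds,
\end{equation*}
for any real $c>1-\min_k b_k$.

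Finally I would match this against the Meijer $G$-function. Performing the substitution $s\mapsto -s$ in the contour integral turns $z^{-s}$ into $z^{s}$ and each factor $\Gamma(b_k+s-1)$ into $\Gamma((b_k-1)-s)$, while the new contour can be taken from $-i\infty$ to $+i\infty$ with all poles of $\Gamma((b_k-1)-s)$, namely $s=b_k-1,b_k,b_k+1,\ldots$, lying to its right. Comparing with the definition of the Meijer $G$-function with $p=0$, $n=0$, $m=q=n$ and $b_j$ replaced by $b_j-1$ gives exactly
\begin{equation*}
g(z)=\frac{1}{\prod_{k=1}^n\Gamma(b_k)}\,G_{0,n}^{n,0}\!\left(z\,\bigl|\,b_1-1,b_2-1,\ldots,b_n-1\right).
\end{equation*}

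The only genuinely delicate step is the bookkeeping in the final substitution — checking that the shifted contour really separates the two sets of poles in the sense demanded by the Meijer $G$-function definition. Since here there is no $\Gamma(1-a_j+s)$ factor (as $n=0$ in the $G$-function notation), there are no poles to the left of the contour to worry about, so the condition is automatic and the identification is clean.
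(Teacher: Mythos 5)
Your proof is correct. The paper itself gives no proof of this proposition --- it is quoted verbatim from Springer and Thompson \cite{Springer} as known background --- and the Mellin-transform argument you give (factorize $\mathbb{E}[z^{s-1}]$ over the independent factors, invert, and recognize the Mellin--Barnes integral defining $G_{0,n}^{n,0}$ after the substitution $s\mapsto -s$) is precisely the standard proof used in that reference, with the pole-location check handled correctly.
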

\subsection{An exact formula for the hole probabilities}
Denote by $\mathcal{N}_{GG}^{(n)}(r;N)$ the number of points of the generalized finite-$N$  Ginibre ensemble with  parameter $n$ in the disk of radius $r$ with its center at $0$.
Alternatively, $\mathcal{N}_{GG}^{(n)}(r;N)$ can be understood as the number of eigenvalues of the random matrix $P_n$ in the disk of radius $r$ with its center at $0$.
By the hole probability $\Prob\left\{\mathcal{N}_{GG}^{(n)}(r;N)=0\right\}$ we mean the probability of an event that there are no points of the generalized finite-$N$  Ginibre ensemble with  parameter $n$ in the disk of radius $r$ with its center at $0$.
\begin{thm}\label{PropositionExactFormulae}
The hole probability $\Prob\left\{\mathcal{N}_{GG}^{(n)}(r;N)=0\right\}$ for the generalized finite-$N$  Ginibre ensemble with  parameter $n$ can be written as
$$
\Prob\left\{\mathcal{N}_{GG}^{(n)}(r;N)=0\right\}=\prod\limits_{k=1}^N\frac{G_{1,n+1}^{n+1,0}\left(r^2\biggl|\begin{array}{cccc}
                                                                 & 1 &  & \\
                                                                0, & k, & \ldots, & k
                                                              \end{array}
\right)}{\left(\Gamma(k)\right)^{n}}.
$$
\end{thm}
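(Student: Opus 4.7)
The plan is to combine Theorem \ref{TheoremMainResult1} and Theorem \ref{TheoremmainResult2} to reduce the hole probability to a product of one-dimensional tail probabilities, and then to evaluate each tail using the Mellin--Barnes representation of Meijer's $G$-function.

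By Theorem \ref{TheoremMainResult1}, the event $\{\mathcal{N}_{GG}^{(n)}(r;N)=0\}$ is the intersection of the $N$ \emph{independent} events $\{R_k^{(n)}>r\}$, $k=1,\ldots,N$. Theorem \ref{TheoremmainResult2} then gives
$$
\Prob\left\{(R_k^{(n)})^2>r^2\right\} = \frac{1}{(\Gamma(k))^n}\int_{r^2}^\infty G_{0,n}^{n,0}\bigl(x\bigm|k-1,\ldots,k-1\bigr)\,dx,
$$
so the theorem reduces to the single-variable identity
$$
\int_{r^2}^\infty G_{0,n}^{n,0}\bigl(x\bigm|k-1,\ldots,k-1\bigr)\,dx = G_{1,n+1}^{n+1,0}\left(r^2\biggl|\begin{array}{cccc} & 1 &  & \\ 0, & k, & \ldots, & k\end{array}\right).
$$

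To prove this identity I would substitute the Mellin--Barnes representation
$$
G_{0,n}^{n,0}(x|k-1,\ldots,k-1) = \frac{1}{2\pi i}\int_C \Gamma^n(k-1-s)\,x^s\,ds,
$$
with $C$ a vertical line in the strip $\re s<-1$; this strip is permissible since, for $k\geq 1$, the poles of $\Gamma(k-1-s)$ all lie at $s=k-1,k,k+1,\ldots$, safely to the right of $C$. Stirling's formula furnishes exponential decay of $\Gamma^n(k-1-s)$ along $C$, so the double integral converges absolutely on $[r^2,\infty)\times C$ and Fubini applies. The inner $x$-integral equals $-(r^2)^{s+1}/(s+1)$, and after the change of variables $s\mapsto s-1$ (crossing no poles) together with the elementary identity $-1/s=\Gamma(-s)/\Gamma(1-s)$, the expression becomes
$$
\frac{1}{2\pi i}\int_{C'}\frac{\Gamma(-s)\Gamma^n(k-s)}{\Gamma(1-s)}\,(r^2)^s\,ds.
$$
By the definition of Meijer's $G$-function recorded in the introduction, this is exactly $G_{1,n+1}^{n+1,0}(r^2|1;0,k,\ldots,k)$, with $p=1$, $q=m=n+1$, $a_1=1$, $b_1=0$, and $b_2=\cdots=b_{n+1}=k$.

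I expect the only real work to be bookkeeping around pole positions: verifying that the initial strip and the shifted contour both respect the pole-separation conventions for the $G$-function, and that the interchange of integrations and the shift in $s$ can be performed without crossing any pole of $\Gamma(-s)$ or $\Gamma^n(k-s)$. Once the single-variable identity is established, the full formula follows by taking the product over $k=1,\ldots,N$.
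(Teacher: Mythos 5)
Your argument is correct, and its skeleton is the same as the paper's: both reduce the hole probability, via Theorems \ref{TheoremMainResult1} and \ref{TheoremmainResult2}, to the product over $k$ of the tail probabilities $\Prob\{(R_k^{(n)})^2>r^2\}=\frac{1}{(\Gamma(k))^n}\int_{r^2}^{\infty}G_{0,n}^{n,0}(x|k-1,\ldots,k-1)\,dx$. Where you diverge is in evaluating that integral. The paper does it by quoting two table identities: Gradshtein--Ryzhik 7.811.3, which turns $\int_{r^2}^\infty G_{0,n}^{n,0}\,dx$ into $r^2G_{1,n+1}^{n+1,0}\bigl(r^2\bigm|\begin{smallmatrix}&0&&\\-1,&k-1,&\ldots,&k-1\end{smallmatrix}\bigr)$, followed by the power-shift identity $z^{\sigma}G_{p,q}^{m,n}(z|a;b)=G_{p,q}^{m,n}(z|a+\sigma;b+\sigma)$ from Luke to reach the stated form. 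You instead derive the identity directly from the Mellin--Barnes definition: with the contour at $\re s<-1$ (admissible, since the poles of $\Gamma^n(k-1-s)$ sit at $s\geq k-1\geq 0$) Fubini is justified by the exponential decay of $\Gamma^n$ along vertical lines, the inner integral gives $-(r^2)^{s+1}/(s+1)$, and the substitution $s\mapsto s-1$ together with $-1/s=\Gamma(-s)/\Gamma(1-s)$ produces exactly the integrand $\Gamma(-s)\Gamma^n(k-s)/\Gamma(1-s)$ on a contour with $\re s<0$, which matches the paper's convention for $G_{1,n+1}^{n+1,0}\bigl(r^2\bigm|\begin{smallmatrix}&1&&\\0,&k,&\ldots,&k\end{smallmatrix}\bigr)$ (and coincides with the integral representation the authors record in Remark (a) after the theorem). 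Your route is more self-contained and makes the pole bookkeeping explicit, at the cost of the convergence/Fubini verification; the paper's route is shorter but leans on external tables. One cosmetic point: the substitution $s\mapsto s-1$ is a genuine change of variables, not a contour deformation, so nothing can be ``crossed''; the relevant check, which you effectively perform, is only that the resulting line $\re s=c+1<0$ still separates the poles of $\Gamma(-s)\Gamma^n(k-s)$ correctly.
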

{\bf Remarks.} \\
(a) Note that there is a convenient integral representation for the Meijer $G$-function in the formula above, namely
$$
G_{1,n+1}^{n+1,0}\left(r^2\biggl|\begin{array}{cccc}
                                                                 & 1 &  & \\
                                                                0, & k, & \ldots, & k
                                                              \end{array}
\right)=\frac{1}{2\pi i}\int\limits_{c-i\infty}^{c+i\infty}r^{-2s}\left(\Gamma(k+s)\right)^n\frac{ds}{s},\;\; c>0.
$$
(b) Since
$$
G_{1,2}^{2,0}\left(r^2\biggl|\begin{array}{cc}
                                                                 & 1 \\
                                                                0, & k
                                                              \end{array}
\right)=\Gamma(k,r^2)=\int\limits_{r^2}^{\infty}e^{-t}t^{k-1}dt,
$$
we see that once $n=1$ the formula in the statement of Theorem \ref{PropositionExactFormulae} reduces to
$$
\Prob\left\{\mathcal{N}_{GG}^{(n=1)}(r;N)=0\right\}=\prod\limits_{k=1}^{N}\frac{\Gamma(k,r^2)}{\Gamma(k)}.
$$
The formula just written above for $\Prob\left\{\mathcal{N}_{GG}^{(n=1)}(r;N)=0\right\}$ is well known, see Grobe, Haake, and Sommers \cite{Grobe},  Forrester \cite{Forrester}.

\subsection{The decay of the hole probabilities}
A basic quantity of interest is the decay of the hole probability as $r\rightarrow\infty$. We investigate the decay of the hole probabilities both for the generalized finite-$N$
Ginibre ensemble with  parameter $n$ (the case of products of $n$ random matrices, each of which is of size $N$), and for the infinite generalized  Ginibre ensemble
(the case of product of $n$ infinite random matrices).
\begin{thm}\label{TheoremGapProbabilities}
\textbf{(A)} (The case of products of random matrices of finite size $N$).\\
As $r\rightarrow\infty$,
\begin{equation}
\begin{split}
&\Prob\left\{\mathcal{N}_{GG}^{(n)}(r;N)=0\right\}\\
&=\frac{(2\pi)^{\frac{(n-1)N}{2}}}{n^{\frac{N}{2}}\prod\limits_{k=1}^N(\Gamma(k))^n}
\exp\left\{-nNr^{\frac{2}{n}}
+N\left(N-\frac{1}{n}\right)\log(r)\right\}
\left(1+O\left(r^{-\frac{2}{n}}\right)\right).
\nonumber
\end{split}
\end{equation}
This implies that for the product of  $n$ matrices of  finite size $N$
we have
$$
\underset{r\rightarrow\infty}{\lim}\left(\frac{1}{r^{\frac{2}{n}}}\log\left[\Prob\left\{\mathcal{N}_{GG}^{(n)}(r;N)=0\right\}\right]\right)=-nN.
$$
\textbf{(B)} (The case of products of infinite random matrices).\\
 The following limiting relation holds true
$$
\underset{r\rightarrow\infty}{\lim}\left(\frac{1}{r^{\frac{4}{n}}}\log\left[\Prob\left\{\mathcal{N}_{GG}^{(n)}(r;N=\infty)\right\}=0\right]\right)=-\frac{n}{4}.
$$
\end{thm}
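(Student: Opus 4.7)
The plan is to combine two ingredients already established in the paper. First, Theorem~\ref{TheoremMainResult1} factorises
$$\Prob\{\mathcal{N}_{GG}^{(n)}(r;N)=0\}=\prod_{k=1}^{N}\Prob\{(R_k^{(n)})^2>r^2\}.$$
Second, Theorem~\ref{PropositionExactFormulae} together with the Mellin--Barnes representation of its Remark~(a) gives, for every $k$,
$$\Prob\{(R_k^{(n)})^2>r^2\}=\frac{1}{(\Gamma(k))^n}\cdot\frac{1}{2\pi i}\int_{c-i\infty}^{c+i\infty}r^{-2s}(\Gamma(k+s))^n\,\frac{ds}{s},\qquad c>0.$$
Both parts will follow from a steepest-descent analysis of this contour integral.

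For Part~(A) I hold $k$ (and $N$) fixed and drive $r\to\infty$. Writing the integrand as $e^{h(s)}$ with $h(s)=-2s\log r+n\log\Gamma(k+s)-\log s$, the saddle equation $n\psi(k+s)-1/s=2\log r$ combined with $\psi(z)=\log z+O(1/z)$ pins down a unique real saddle $s_*=r^{2/n}-k+O(1)$ with second derivative $h''(s_*)\sim n\,r^{-2/n}$. Applying Stirling at $s_*$ and executing the Gaussian integration along the vertical descent direction yields
$$\Prob\{(R_k^{(n)})^2>r^2\}=\frac{(2\pi)^{(n-1)/2}}{\sqrt{n}\,(\Gamma(k))^n}\,r^{2k-1-1/n}\,e^{-nr^{2/n}}\bigl(1+O(r^{-2/n})\bigr).$$
Taking the product over $k=1,\dots,N$ and using $\sum_{k=1}^N(2k-1-1/n)=N(N-1/n)$ reproduces exactly the constant, exponential and polynomial factors claimed in (A), and in particular the stated $r^{-2/n}\log\Prob\to -nN$.

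For Part~(B) I pass to logarithms in the infinite product and split the sum at the threshold $K=K(r):=\lfloor r^{2/n}\rfloor$. On the low range $1\le k\le K-r^{1/n}\log r$, the per-$k$ saddle-point expansion of Part~(A) still applies uniformly in $k$ and gives
$$-\log\Prob\{(R_k^{(n)})^2>r^2\}=n\,r^{2/n}-(2k-1-1/n)\log r+n\log\Gamma(k)+O(1).$$
Summing, and applying Stirling/Euler--Maclaurin in the form $\sum_{k=1}^K\log\Gamma(k)=\tfrac12 K^2\log K-\tfrac34 K^2+o(K^2)$, three cancelling contributions of size $r^{4/n}\log r$ appear and leave $nr^{4/n}-\tfrac{3n}{4}r^{4/n}=\tfrac{n}{4}r^{4/n}$ as the leading term. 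On the high range $k>K+r^{1/n}\log r$, concentration of $(R_k^{(n)})^2/k^n$ at $1$ with Gaussian fluctuations of scale $k^{-1/2}$ forces $\Prob\{(R_k^{(n)})^2\le r^2\}$ to decay faster than $\exp(-c(k-K)^2/K)$, so $-\log\Prob$ sums to $O(\sqrt{K})=O(r^{1/n})$. Dividing by $r^{4/n}$ and sending $r\to\infty$ produces the stated limit $-n/4$.

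The one delicate point is the transition window $|k-K|\lesssim r^{1/n}$, where the saddle $s_*$ of Part~(A) lies within $O(1)$ of the pole of $1/s$ at $s=0$ and the Gaussian step of the saddle-point method no longer provides the leading asymptotic. Inside this window, however, $\Prob\{(R_k^{(n)})^2>r^2\}$ is bounded below by a positive constant (by a central limit theorem for $\log(R_k^{(n)})^2$, whose variance $n\psi'(k)\sim n/K$ matches the width of the window), so each summand contributes $O(1)$ and the total contribution from the window is $O(r^{1/n})$, absorbed in $o(r^{4/n})$. This is the main obstacle; once it is dispatched, both (A) and (B) follow from the two calculations above.
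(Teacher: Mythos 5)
Your Part (A) is, in substance, the paper's own argument: the paper likewise reduces to the per-$k$ factors via Theorems \ref{TheoremMainResult1} and \ref{PropositionExactFormulae}, and then quotes the asymptotic expansion of $G^{q,0}_{p,q}$ from Luke (Section 5.7), which is precisely the saddle-point asymptotics of the Mellin--Barnes integral that you rederive; your per-$k$ formula $\frac{(2\pi)^{(n-1)/2}}{\sqrt{n}\,(\Gamma(k))^n}r^{2k-1-1/n}e^{-nr^{2/n}}\bigl(1+O(r^{-2/n})\bigr)$ and the bookkeeping $\sum_{k=1}^N(2k-1-1/n)=N(N-1/n)$ match the paper's exactly. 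For Part (B), however, you take a genuinely different route. The paper proves the two bounds separately and cheaply: the upper bound by Markov's inequality applied to $\varphi(x)=x^{\alpha}$ with $\alpha=r^{2/n}-k$ (using the explicit moments $\int_0^\infty t^\alpha G^{n,0}_{0,n}(t|k-1,\dots,k-1)\,dt=(\Gamma(k+\alpha))^n$), keeping only the factors with $k\le r^{2/n}$ and bounding the rest by $1$; the lower bound by reducing to the known $n=1$ estimate of Hough et al.\ through the independence inequality $\Prob\{(R_k^{(n)})^2>r^2\}\ge\bigl[\Prob\{(R_k^{(1)})^2>r^{2/n}\}\bigr]^n$. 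Your single-pass evaluation of $\log\Prob$ by summing uniform per-$k$ asymptotics is more informative (it would yield lower-order terms too) but is technically much heavier.

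The step you must actually supply is the uniformity in $k$ of the saddle-point expansion up to $k=K-r^{1/n}\log r$. Part (A) is an asymptotic for \emph{fixed} $k$ as $r\to\infty$; at the top of your low range the saddle $s_*=r^{2/n}-k$ has shrunk to order $r^{1/n}\log r$, only about $\log r$ widths $\bigl(h''(s_*)^{-1/2}\sim r^{1/n}/\sqrt{n}\bigr)$ from the pole of $1/s$ at the origin, so the per-term error $O(1)$ you claim is not automatic --- indeed at the edge of the window the correct per-term contribution is of order $(\log r)^2$, coming from a Gaussian tail of $\log(R_k^{(n)})^2$ at about $\sqrt{n}\log r$ standard deviations, not $O(1)$. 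None of this is fatal: an error of size $O(r^{2/n-\epsilon})$ per term, summed over $K\sim r^{2/n}$ values of $k$, is still $o(r^{4/n})$, and your leading-order cancellation $nK^2-\tfrac{3n}{4}K^2=\tfrac{n}{4}K^2$ is correct. But as written, the uniform expansion in the low range, the Gaussian concentration bound in the high range, and the window estimate are all asserted rather than proved, and together they carry essentially all of the work of Part (B). If you do not want to carry it out, the paper's Chernoff-type upper bound and its reduction of the lower bound to the $n=1$ case reach the same limit with far less machinery.
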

{\bf Remarks.} \\
(a) For the classical (infinite) Ginibre ensemble we have $n=1$, and Theorem \ref{TheoremGapProbabilities}, (B) says that
$$
\frac{1}{r^{4}}\Prob\left\{\log\mathcal{N}_{GG}^{(n=1)}(r;N=\infty)=0\right\}\rightarrow-\frac{1}{4},
$$
as $r\rightarrow\infty$. This asymptotic result for the classical Ginibre ensemble is well known, see Grobe, Haake, and Sommers \cite{Grobe}, Forrester \cite{Forrester},
Hough, Krishnapur, Peres, and Vir$\acute{\mbox{a}}$g \cite{Hough}, Akemann,  Phillips, and Shifrin \cite{Akemann2} for different proofs and related results.\\
(b) The result of Theorem \ref{TheoremGapProbabilities}, (B) can be compared with the decay of hole probabilities for the zeros of the Gaussian  analytic function,
$$
 f(z)=\sum\limits_{n=0}^{\infty}\frac{a_nz^n}{\sqrt{n!}},
$$
where $a_n$ are i.i.d. standard complex Gaussian random variables.
Namely, it was proved by Sodin and Tsirelson \cite{Sodin} that the hole probability for the zeros decays like $\exp\{-cr^4\}$.
Theorem  \ref{TheoremGapProbabilities} says that for the generalized infinite Ginibre ensemble with  parameter $n$  the hole probability decays like $\exp\{-Cr^{\frac{4}{n}}\}$. Thus  we have the same behavior only  for $n=1$.
\subsection{Overcrowding}
 Consider a disk  with a fixed radius $r>0$, and with its center at $0$. Recall that   $\mathcal{N}_{GG}^{(n)}(r;N=\infty)$ denotes the number of points of the generalized infinite  Ginibre ensemble with  parameter $n$ in this disk.
The problem is to estimate the probability of the event that in this disc  there are more than $m$ points of the  ensemble, i.e. to estimate
$\Prob\left\{\mathcal{N}_{GG}^{(n)}(r;N=\infty)\geq m\right\}$. We are especially interested in the decay of this probability, $\Prob\left\{\mathcal{N}_{GG}^{(n)}(r;N=\infty)\geq m\right\}$,
as $m\rightarrow\infty$.

In this article we prove the following Theorem.
\begin{thm}\label{TheoremOvercrowdingProblem}
Let $\mathcal{N}_{GG}^{(n)}(r;N=\infty)$ be the number of points of the infinite generalized Ginibre ensemble in the disk of radius $r$ around $0$.
Then for a fixed $r>0$
$$
\Prob\left\{\mathcal{N}_{GG}^{(n)}(r;N=\infty)\geq m\right\}=\exp\left\{-\frac{1}{2}nm^2\log(m)(1+o(1))\right\},
$$
as $m\rightarrow\infty$.
\end{thm}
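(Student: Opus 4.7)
The plan is to translate the count $\mathcal{N}_{GG}^{(n)}(r;\infty)$ into a sum of independent Bernoulli indicators via Theorem \ref{TheoremMainResult1}. Setting $p_k := \Prob\{R_k^{(n)}\leq r\}$, one has
\[
\mathcal{N}_{GG}^{(n)}(r;\infty) \stackrel{d}{=} \sum_{k=1}^{\infty} Z_k, \quad Z_k = \mathbf{1}_{\{R_k^{(n)}\leq r\}},
\]
with the $Z_k$ independent Bernoulli of parameter $p_k$. The whole argument rests on a single sharp estimate,
\[
\log p_k = -n k \log k + O(k) \qquad (k\to\infty), \qquad (\ast)
\]
which I would establish by combining Theorem \ref{TheoremmainResult2} with the Meijer $G$-function shift identity $G^{n,0}_{0,n}(x|k-1,\ldots,k-1)=x^{k-1}G^{n,0}_{0,n}(x|0,\ldots,0)$. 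This produces the representation $p_k = \Gamma(k)^{-n}\int_0^{r^2}x^{k-1}G(x)\,dx$ with $G(x):=G^{n,0}_{0,n}(x|0,\ldots,0)$ strictly positive on $(0,\infty)$ and bounded on $[r^2/2,r^2]$. A Laplace-type concentration near the endpoint $x=r^2$ then yields $p_k \sim G(r^2)\,r^{2k}/(k\,\Gamma(k)^n)$, and Stirling's formula turns this into $(\ast)$.

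For the lower bound, restrict to the event that the first $m$ radii all fall inside the disk:
\[
\Prob\{\mathcal{N}_{GG}^{(n)}(r;\infty)\geq m\} \geq \prod_{k=1}^{m}p_k.
\]
Using $(\ast)$ and the elementary sum $\sum_{k=1}^{m}k\log k=\tfrac12 m^2\log m+O(m^2)$, the logarithm of the right-hand side equals $-\tfrac{n}{2}m^2\log m(1+o(1))$, giving the desired matching lower bound.

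For the upper bound I would use the first-moment-of-binomial-coefficient inequality
\[
\Prob\{\mathcal{N}_{GG}^{(n)}(r;\infty)\geq m\} \leq \mathbb{E}\binom{\mathcal{N}_{GG}^{(n)}(r;\infty)}{m} = \sum_{k_1<\cdots<k_m}p_{k_1}\cdots p_{k_m} =: e_m(p_1,p_2,\ldots),
\]
where $e_m$ is the $m$-th elementary symmetric polynomial in the $p_k$. \textbf{The main obstacle} is showing that $e_m$ has the same exponential order as its largest summand $\prod_{k=1}^{m}p_k$, since a crude bound like $e_m\le(\sum_k p_k)^m/m!$ would only yield $\exp(-m\log m)$, off by a factor of $m$ in the exponent. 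For this I would invoke the combinatorial estimate
\[
e_m(p_1,p_2,\ldots) \leq \prod_{i=1}^{m}\Big(\sum_{k\geq i}p_k\Big),
\]
which is valid because any strictly increasing tuple $(k_1<\cdots<k_m)$ must satisfy $k_i\geq i$. From $(\ast)$ one reads off $p_{k+1}/p_k=O(k^{-1})$, so for large $i$ the tail $\sum_{k\geq i}p_k$ is dominated by $p_i$ up to a bounded multiplicative constant, and the finitely many small-$i$ tails contribute only a $C^m$ factor. Therefore $e_m\leq C^m\prod_{i=1}^{m}p_i=\exp(-\tfrac{n}{2}m^2\log m(1+o(1)))$, since the $O(m)$ contribution from $C^m$ is negligible against the $m^2\log m$ main term. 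Combined with the lower bound this proves the theorem.
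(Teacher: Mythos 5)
Your argument is correct, and it reaches the theorem by a genuinely different route in both halves. For the individual probabilities $p_k=\Prob\{(R_k^{(n)})^2<r^2\}$ the paper never computes an asymptotic: its lower bound comes from writing $(R_k^{(n)})^2$ as a product of $n$ sums of $k$ i.i.d.\ exponentials and using $\Prob\{\xi<x\}\geq x/2$, while its upper bound comes from Markov's inequality with $\varphi(x)=x^{-\alpha}$, the moment formula (\ref{MainIntegralFormula}), and the choice $\alpha=k-\tfrac12$. Your single Laplace-type estimate $p_k\sim G(r^2)r^{2k}/(k\,\Gamma(k)^n)$ replaces both bounds at once and is sharper; the shift identity you invoke is exactly the one the paper uses in its proof of Theorem \ref{TheoremmainResult2}, so that step is sound. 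For the overcrowding upper bound the paper truncates at $m^2$ indices, estimates the tail $\sum_{k>m^2}p_k$ separately, and bounds the head by $\binom{m^2}{m}\prod_{k=1}^m p_k$ with $\binom{m^2}{m}<m^{2m}$ (implicitly using that $p_k$ is decreasing so that $\{1,\dots,m\}$ maximizes the product over $m$-subsets); your chain $\Prob\{\mathcal N_{GG}^{(n)}(r;\infty)\geq m\}\leq\mathbb E\binom{\mathcal N_{GG}^{(n)}(r;\infty)}{m}=e_m(p)\leq\prod_{i=1}^m\sum_{k\geq i}p_k$ is cleaner, avoids the truncation entirely, and you correctly identify why the crude bound $e_m\leq e_1^m/m!$ would lose a factor of $m$ in the exponent. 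One small correction: the ratio bound $p_{k+1}/p_k=O(k^{-1})$ does \emph{not} follow from $(\ast)$ alone, since the $O(k)$ error there swamps the $-n\log k$ increment; it does follow immediately from your sharper asymptotic $p_k\sim G(r^2)r^{2k}/(k\,\Gamma(k)^n)$ (which in fact gives $p_{k+1}/p_k=O(k^{-n})$), so cite that instead. With that attribution fixed, the proof is complete.
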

{\bf Remarks.} \\
(a) Theorem \ref{TheoremOvercrowdingProblem} is a generalization of the result obtained by Krishnapur (see Krishnapur \cite{Krishnapur}, Section 2.1)
for the classical infinite Ginibre ensemble (the case corresponding to $n=1$).\\
(b) In the context of zeros of Gaussian analytic function the overcrowding problem was formulated by Yuval Peres, and was studied in detail by
 Krishnapur in \cite{Krishnapur}. It was shown that the probability of the event that in the disk with a fixed radius $r$ around $0$  there are more than $m$ zeros of the Gaussian
 analytic function
  decays in the same way as $\Prob\left\{\mathcal{N}_{GG}^{(n=1)}(r;N=\infty)\geq m\right\}$.
 \section{Proofs of Theorem \ref{TheoremMainResult1} and \ref{TheoremmainResult2}}
Let $r_1,\ldots, r_N$ be the moduli of the eigenvalues $z_1,\ldots,z_N$ of $P_n$, i.e. $r_1=|z_1|$, $r_2=|z_2|$, $\ldots$, $r_N=|z_N|$. Thus $r_1\leq\ldots\leq r_N$. We want to find the joint density of $(r_i)_{i=1,\ldots,N}$.
\begin{prop}\label{PropositionTheJointDensityOfModuli}
The joint density of $(r_i)_{i=1,\ldots,N}$ is given by
$$
\frac{(2\pi)^N}{(\pi^N\prod_{k=1}^{N}\Gamma(k))^n}\per[r_i^{2j-1}]_{i,j=1}^N\prod\limits_{j=1}^Nw_n(r_j).
$$
\end{prop}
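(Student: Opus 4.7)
The plan is to derive the joint density of the ordered moduli by starting from the Akemann--Burda joint eigenvalue density (\ref{AkemannBurdaJointDensity}), passing to polar coordinates $z_k = r_k e^{i\theta_k}$, and integrating out the angles $\theta_1,\ldots,\theta_N$.

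First I would observe two simplifications. Since $w_n(z_k)$ depends only on $|z_k|=r_k$, the weight factor $\prod_k w_n(z_k)$ becomes $\prod_k w_n(r_k)$ and is independent of the angles. The Lebesgue measure on $\C^N$ is $\prod_k dm(z_k) = \prod_k r_k\, dr_k\, d\theta_k$, so a Jacobian factor $\prod_k r_k$ appears when changing variables.

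Next I would handle the Vandermonde factor. Writing
\[
\prod_{1\le i<j\le N}|z_i-z_j|^2 \;=\; \det\bigl[z_i^{j-1}\bigr]_{i,j=1}^N\,\overline{\det\bigl[z_i^{j-1}\bigr]_{i,j=1}^N},
\]
expanding both determinants as sums over permutations $\sigma,\tau\in S_N$, and substituting $z_i=r_ie^{i\theta_i}$, I would get
\[
\prod_{i<j}|z_i-z_j|^2 \;=\; \sum_{\sigma,\tau\in S_N}\sgn(\sigma)\sgn(\tau)\prod_{i=1}^N r_i^{\sigma(i)+\tau(i)-2}\,e^{i\theta_i(\sigma(i)-\tau(i))}.
\]
Now I would integrate over each $\theta_i\in[0,2\pi)$ independently. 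The orthogonality relation
\[
\int_0^{2\pi} e^{i\theta(\sigma(i)-\tau(i))}\,d\theta \;=\; 2\pi\,\delta_{\sigma(i),\tau(i)}
\]
forces $\sigma=\tau$, the signs cancel, and each $\theta_i$ integration contributes $2\pi$. This leaves
\[
\int_{[0,2\pi)^N}\prod_{i<j}|z_i-z_j|^2\,d\theta_1\cdots d\theta_N \;=\; (2\pi)^N\sum_{\sigma\in S_N}\prod_{i=1}^N r_i^{2(\sigma(i)-1)} \;=\; (2\pi)^N\per\bigl[r_i^{2(j-1)}\bigr]_{i,j=1}^N.
\]

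Finally I would absorb the Jacobian $\prod_k r_k$ into the permanent by pulling a factor $r_i$ into row $i$, obtaining $\prod_i r_i\cdot \per[r_i^{2(j-1)}] = \per[r_i^{2j-1}]$. Combining everything with the prefactor from (\ref{AkemannBurdaJointDensity}) and the weight $\prod_j w_n(r_j)$ yields exactly the claimed density on the ordered simplex $r_1\le r_2\le\cdots\le r_N$.

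No serious obstacle arises: the argument is a routine polar-coordinate calculation. The only point requiring modest care is the combinatorial identity collapsing the squared Vandermonde into a permanent after angular integration, and the bookkeeping that folds the Jacobian $\prod_k r_k$ into the exponent $2(j-1)\mapsto 2j-1$; both are standard manipulations in the style of Kostlan \cite{Kostlan} and Hough--Krishnapur--Peres--Vir\'ag \cite{Hough}.
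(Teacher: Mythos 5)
Your proposal is correct and follows essentially the same route as the paper: pass to polar coordinates, expand the squared Vandermonde as a double sum over permutations, use angular orthogonality to force the two permutations to coincide (yielding the factor $(2\pi)^N$ and the permanent $\per[r_i^{2(j-1)}]$), and absorb the Jacobian $\prod_i r_i$ into the permanent to shift the exponents to $2j-1$. No gaps.
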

\begin{proof}Let $z_1,\ldots,z_N$ be the eigenvalues of $P_n$. The joint density of $(z_i)_{i=1,\ldots,N}$
is given by formula (\ref{AkemannBurdaJointDensity}). Set
$$
z_i=r_ie^{i\theta_i},\;\;i=1,\ldots, N.
$$
We have
\begin{equation}
\begin{split}
&\prod\limits_{1\leq i<j\leq N}|z_i-z_j|^2=\left|\sum\limits_{\sigma\in S(N)}(-1)^{\sgn(\sigma)}\prod\limits_{j=1}^Nz_j^{\sigma(j)-1}\right|^2\\
&=\sum\limits_{\sigma,\sigma'\in S(N)}(-1)^{\sgn(\sigma)+\sgn(\sigma')}
\prod\limits_{j=1}^Nr_j^{\sigma(j)-1}e^{i(\sigma(j)-1)\theta_j}
\prod\limits_{k=1}^Nr_k^{\sigma'(k)-1}e^{-i(\sigma'(k)-1)\theta_k}.
\end{split}
\nonumber
\end{equation}
This gives
\begin{equation}
\begin{split}
\int\limits_0^{2\pi}\ldots\int\limits_0^{2\pi}\prod\limits_{1\leq i<j\leq N}|z_i-z_j|^2d\theta_1\ldots d\theta_N=(2\pi)^N\sum\limits_{\sigma\in S(N)}\prod\limits_{j=1}^Nr_j^{2\sigma(j)-2}.
\end{split}
\nonumber
\end{equation}
Therefore the joint density of $(r_i)_{i=1,\ldots,N}$ is
\begin{equation}
\begin{split}
&(2\pi)^N\left(\frac{1}{\pi^N\prod_{k=1}^{N}\Gamma(k)}\right)^n\prod\limits_{k=1}^Nw_n(r_k)\sum\limits_{\sigma\in S(N)}\prod\limits_{j=1}^Nr_j^{2\sigma(j)-2}\prod\limits_{j=1}^Nr_j
\\
&=\frac{(2\pi)^N}{(\pi^N\prod_{k=1}^{N}\Gamma(k))^n}\per[r_i^{2j-1}]_{i,j=1}^N\prod\limits_{j=1}^Nw_n(r_j).
\end{split}
\nonumber
\end{equation}
\end{proof}
Proposition \ref{PropositionTheJointDensityOfModuli} implies that the random variables $y_1=r_1^2$, $y_2=r_2^2$, $\ldots$, $y_N=r_N^2$
have the joint density given by
$$
\per\left[\frac{y_i^{j-1}}{\pi^{n-1}(\Gamma(j))^n}w_n(\sqrt{y_i})\right]_{i,j=1}^N.
$$
Observe that formula (\ref{MainIntegralFormula}) implies
$$
\int\limits_0^{\infty}y^{j-1}w_n(\sqrt{y})dy=\pi^{n-1}\Gamma^n(j).
$$
Therefore the vector of squares of absolute values of eigenvalues of $P_n$ (in uniform order) has the density
$$
\frac{1}{N!}\per\left[\varrho_j^{(n)}(y_i)\right]_{i,j=1}^N,
$$
where the functions $\varrho_j^{(n)}(y)$, $1\leq j\leq N$, are probability density functions defined by
$$
\varrho_j^{(n)}(y)=\left\{
               \begin{array}{ll}
                 \frac{y^{j-1}w_n(\sqrt{y})}{\pi^{n-1}(\Gamma(j))^n}, & y\geq 0, \\
                 0, & y<0.
               \end{array}
             \right.
$$
Since
$$
y^{j-1}w_n(\sqrt{y})=y^{j-1}G_{0,n}^{n,0}(y|0,0,\ldots,0)=G_{0,n}^{n,0}(y|j-1,j-1,\ldots,j-1),
$$
we conclude that $\varrho_j^{(n)}(y)$ can be rewritten as in equation (\ref{equation g_i(x)}).

To complete the proof of Theorem \ref{TheoremmainResult2} we use the following well known fact (see, for example, Kostlan \cite{Kostlan}, Lemma 1.5).
Assume we are given an $N$-tuplet of independent random variables $(A_i)$, $1\leq i\leq N$, with the corresponding densities $(\varrho_i)$, $1\leq i\leq N$.
Define a new $N$-tuplet of random variables, $(B_i)$, $1\leq i\leq N$, as a random permutation of the vector $(A_i)$, $1\leq i\leq N$ (these random permutations are equal
to each other in probability).  Then the joint density of the random vector $(B_i)$, $1\leq i\leq N$, is
$$
\frac{1}{N!}\per\left[\varrho_j(B_i)\right]_{i,j=1}^N.
$$
Considering squares of moduli of unordered eigenvalues of $P_n$ as random variables $(B_i)$, $1\leq i\leq N$, we obtain the statement of Theorem \ref{TheoremmainResult2}.

Theorem \ref{TheoremMainResult1} follows from Theorem \ref{TheoremmainResult2}, and from the  result by Springer and Thompson \cite{Springer} on the distribution of a product
of independent gamma variables, see Proposition \ref{PropositionSpringerThompson}. Namely,
Proposition \ref{PropositionSpringerThompson} and Theorem \ref{TheoremmainResult2} imply that each random variable $\left(R_i^{(n)}\right)^2$, $1\leq i\leq N$, has the the same distribution as the product of $n$ identically distributed and independent gamma variables having density function (\ref{GammaDensityFunction}). Theorem \ref{TheoremMainResult1} is proved.
\qed
\section{Proof of Theorem \ref{PropositionExactFormulae}}
From Theorem \ref{TheoremMainResult1} we conclude that
$$
\Prob\left\{\mathcal{N}_{GG}^{(n)}(r;N)=0\right\}=\prod\limits_{k=1}^N\Prob\left\{\left(R_k^{(n)}\right)^2>r^2\right\},
$$
where the random variables $R_1^{(n)}$, $R_2^{(n)}$, $\ldots$, $R_N^{(n)}$ are those introduced in the statement of Theorem \ref{TheoremMainResult1}.
By Theorem \ref{TheoremmainResult2}
$$
\Prob\left\{\left(R_k^{(n)}\right)^2>r^2\right\}=\frac{1}{(\Gamma(k))^n}\int\limits_{r^2}^{\infty}G_{0,n}^{n,0}(x|k-1,\ldots,k-1)dx.
$$
The last integral can be computed explicitly using the formula
\begin{equation}
\begin{split}
\int\limits_1^{\infty}x^{-\rho}(x-1)^{\sigma-1}G_{p,q}^{m,n}\left(\alpha x\biggl|\begin{array}{ccc}
                                                                 a_1,& \ldots, & a_p  \\
                                                                b_1, &  \ldots, & b_q
                                                              \end{array}
\right)dx=
\Gamma(\sigma)G_{p+1,q+1}^{m+1,n}\left(\alpha \biggl|\begin{array}{cccc}
                                                                 a_1,& \ldots, & a_p, & \rho \\
                                                               \rho-\sigma,& b_1, &  \ldots, & b_q
                                                              \end{array}
\right),
\nonumber
\end{split}
\end{equation}
see Gradshtein and Ryzhik \cite{Gradshtein}, 7.811.3. This gives
$$
\Prob\left\{(R_k^{(n)})^2>r^2\right\}=\frac{r^2G_{1,n+1}^{n+1,0}\left(r^2 \biggl|\begin{array}{cccc}
                                                                 & 0 &  &  \\
                                                              -1,& k-1, &  \ldots, & k-1
                                                              \end{array}
\right)}{(\Gamma(k))^n}.
$$
Since
$$
z^{\sigma}G_{p,q}^{m,n}\left(z \biggl|\begin{array}{ccc}
                                                                 a_1,& \ldots, & a_p \\
                                                               b_1,&   \ldots, & b_q
                                                              \end{array}
\right)=G_{p,q}^{m,n}\left(z \biggl|\begin{array}{ccc}
                                                                 a_1+\sigma,& \ldots, & a_p+\sigma \\
                                                               b_1+\sigma,&   \ldots, & b_q+\sigma
                                                              \end{array}
\right),
$$
(see, for example, Luke \cite{Luke}, Section 5.4) we can rewrite the expression for
$\Prob\left\{\left(R_k^{(n)}\right)^2>r^2\right\}$ as
$$
\Prob\left\{\left(R_k^{(n)}\right)^2>r^2\right\}=\frac{G_{1,n+1}^{n+1,0}\left(r^2\biggl|\begin{array}{cccc}
                                                                 & 1 &  & \\
                                                                0, & k, & \ldots, & k
                                                              \end{array}
\right)}{\left(\Gamma(k)\right)^{n}},
$$
and the result of Theorem \ref{PropositionExactFormulae} follows.
\qed
\section{Proof of Theorem \ref{TheoremGapProbabilities}}
\subsection{Proof of the asymptotic formula for the hole probability for the generalized finite-$N$  Ginibre ensemble}
We use Theorem \ref{PropositionExactFormulae}, which expresses the hole probability
$\Prob\left\{\mathcal{N}_{GG}^{(n)}(r;N)=0\right\}$ in terms of the Meijer $G$-functions.
The following asymptotic formula holds true (see Luke \cite{Luke}, Section 5.7)
\begin{equation}\label{MeijerGAsymptotics}
\begin{split}
&G_{p,q}^{q,0}\left(z\biggl|\begin{array}{cccc}
                                                                a_1, & a_2, & \ldots, & a_p \\
                                                                b_1, & b_2, & \ldots, & b_q
                                                              \end{array}
\right)\sim\frac{(2\pi)^{(\sigma-1)/2}}{\sigma^{1/2}}\exp\left\{-\sigma z^{1/\sigma}\right\}z^{\theta}\sum\limits_{k=0}^{\infty}M_k z^{-k/\sigma},
\end{split}
\nonumber
\end{equation}
where
$$
|z|\rightarrow\infty,\;\;\;|\arg z|\leq(\sigma+\epsilon)\pi-\delta,\;\;\; \delta>0.
$$
In the formula above the parameters $\sigma$ and $\epsilon$ are defined by
$$
\sigma=q-p,
$$
and
$$
\epsilon=\frac{1}{2}\;\;\;\mbox{if}\;\;\;\sigma=1,\;\;\;\epsilon=1\;\;\;\mbox{if}\;\;\;\sigma> 1.
$$
The parameter $\theta$ is defined by the formula
$$
\sigma\theta=\left\{\frac{1}{2}(1-\sigma)+\sum\limits_{k=1}^qb_k-\sum\limits_{k=1}^pa_k\right\}.
$$
Finally, the coefficients $M_k$'s are independent of $z$ and can be found explicitly. In particular, $M_0=1$.

In our case $p=1$, $q=n+1$, $\sigma=n$, and $z=r^2$. It is not hard to find the parameter $\theta$ as well. The result is
$$
\theta=k-\frac{1}{2}-\frac{1}{2n}.
$$
This gives
\begin{equation}\label{MeijerGAsymptotics1}
\begin{split}
&G_{1,n+1}^{n+1,0}\left(r^2\biggl|\begin{array}{cccc}
                                                                 & 1 &  &  \\
                                                                0, & k, & \ldots, & k
                                                              \end{array}
\right)=\frac{(2\pi)^{\frac{n-1}{2}}}{n^{\frac{1}{2}}}\exp\left\{-nr^{\frac{2}{n}}\right\}
r^{2k-1-\frac{1}{n}}\left[1+O\left(r^{-\frac{2}{n}}\right)\right],\\
\end{split}
\nonumber
\end{equation}
as $r\rightarrow\infty$. We insert this asymptotic expression into the formula in the statement of Theorem \ref{PropositionExactFormulae}.
The statement of Theorem \ref{TheoremGapProbabilities} (A) follows immediately.
\qed
\subsection{Proof of the asymptotic formula for the hole probability for the generalized infinite  Ginibre ensemble}
\subsubsection{An upper bound for the hole probability}\label{SectionUpperBound}
To estimate the hole probabilities we use the following standard fact (called the Markov inequality).
\begin{prop}\label{PropositionMarkovInequality}
Suppose $\varphi:\R\rightarrow\R$ is a positive valued function, and let $A$ be a Borel subset of $\R$. Then
$$
\inf\left\{\varphi(y):y\in A\right\}\cdot\Prob\left\{X\in A\right\}\leq \mathbb{E}\varphi(X).
$$
\end{prop}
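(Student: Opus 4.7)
The plan is to use the standard one-line argument for Markov's inequality, adapted to the slightly more general form stated here (a general positive test function $\varphi$ and a general Borel set $A$, rather than the classical $\varphi(x)=x$ and $A=[a,\infty)$). First I would write
\[
\mathbb{E}\varphi(X)=\int_{\R}\varphi(y)\,d\mu_X(y),
\]
where $\mu_X$ denotes the law of $X$ on $\R$. Splitting the integral over $A$ and its complement, and discarding the contribution from $\R\setminus A$ (which is non-negative because $\varphi\geq 0$), I obtain
\[
\mathbb{E}\varphi(X)\geq\int_{A}\varphi(y)\,d\mu_X(y).
\]

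The second step is to bound the integrand from below by its infimum on $A$, giving
\[
\int_{A}\varphi(y)\,d\mu_X(y)\geq\inf\{\varphi(y):y\in A\}\cdot\mu_X(A)=\inf\{\varphi(y):y\in A\}\cdot\Prob\{X\in A\},
\]
which is the claimed inequality. There is essentially no obstacle: the only points to check are that $\varphi$ being positive-valued makes the discarded piece non-negative (so the first inequality is valid), and that $\inf\{\varphi(y):y\in A\}$ is a well-defined extended real number, so the second inequality is trivially valid (in the degenerate case $\Prob\{X\in A\}=0$ the inequality holds with the right-hand side interpreted as $0$, and in the case where the infimum is $+\infty$ the set $A$ necessarily has measure zero under $\mu_X$ unless $\mathbb{E}\varphi(X)=+\infty$, in which case the inequality is again trivial).
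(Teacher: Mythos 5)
Your argument is correct and is exactly the standard proof of this generalized Markov inequality: restrict the integral defining $\mathbb{E}\varphi(X)$ to $A$ (valid since $\varphi\geq 0$) and bound the integrand below by $\inf\{\varphi(y):y\in A\}$. The paper itself does not spell out a proof but simply cites Durrett (Section 1.6, Theorem 1.6.4), where the same one-line argument appears, so you have supplied the intended reasoning; your attention to the degenerate cases ($\Prob\{X\in A\}=0$ or an infinite infimum) is a harmless bonus.
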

\begin{proof}
See, for example, Durrett \cite{Durrett}, Section 1.6, Theorem 1.6.4.
\end{proof}
\begin{prop}\label{PropositionUpperBound}
We have
$$
\Prob\left\{\mathcal{N}_{GG}^{(n)}(r;N=\infty)=0\right\}\leq\exp\left\{-\frac{n}{4}r^{\frac{4}{n}}+O\left(\log(r)\right)\right\},
$$
as $r\rightarrow\infty$.
\end{prop}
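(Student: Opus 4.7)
The plan is to combine Theorem~\ref{TheoremMainResult1} with Markov's inequality (Proposition~\ref{PropositionMarkovInequality}). By the independence supplied by Theorem~\ref{TheoremMainResult1}, the hole event factors as
$$
\Prob\left\{\mathcal{N}_{GG}^{(n)}(r;N=\infty)=0\right\}=\prod_{k=1}^{\infty}\Prob\left\{(R_k^{(n)})^2>r^2\right\},
$$
where $(R_k^{(n)})^2$ is distributed as the product of $n$ independent $\GAMMA(k,1)$ variables. For each factor I apply Proposition~\ref{PropositionMarkovInequality} with $A=(r^{2},\infty)$ and $\varphi(y)=y^{s}$ for a free parameter $s>0$; since moments of independent variables multiply and the $s$-th moment of a $\GAMMA(k,1)$ variable equals $\Gamma(k+s)/\Gamma(k)$, this yields
$$
\Prob\bigl\{(R_k^{(n)})^2>r^2\bigr\}\le r^{-2s}\left(\frac{\Gamma(k+s)}{\Gamma(k)}\right)^{n}.
$$

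Set $M=\lfloor r^{2/n}\rfloor$. For $k>M$ the mean $k^{n}$ of $(R_k^{(n)})^2$ already exceeds $r^{2}$, so Markov's inequality gives no useful bound there and I simply use $\Prob\{\cdot\}\le 1$ for those factors. For $1\le k\le M$, I choose $s=s_{k}=M-k\ge 0$, which collapses $\Gamma(k+s_{k})$ to the common value $\Gamma(M)$ independently of $k$. Taking logarithms and summing over $k=1,\ldots,M$ gives
$$
\log\Prob\{\mathrm{hole}\}\le -2\log(r)\cdot\tfrac{M(M-1)}{2}+nM\log\Gamma(M)-n\sum_{k=1}^{M}\log\Gamma(k).
$$

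The final step is to evaluate the right-hand side asymptotically. Stirling gives $\log\Gamma(M)=M\log M-M+O(\log M)$, and summing Stirling term by term (equivalently, invoking the Barnes $G$-function asymptotic) yields $\sum_{k=1}^{M}\log\Gamma(k)=\tfrac{M^{2}}{2}\log M-\tfrac{3}{4}M^{2}+O(M\log M)$. Substituting $\log M=\tfrac{2}{n}\log(r)+O(r^{-2/n})$, the three $M^{2}\log r$ contributions cancel exactly, leaving $-\tfrac{n}{4}M^{2}=-\tfrac{n}{4}r^{4/n}$ as the principal part, with a subleading remainder that is to be absorbed into the $O(\log r)$ term advertised in the proposition.

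The crux of the argument is precisely this cancellation of the $M^{2}\log r$ terms: the coefficient $-n/4$ originates from the $-\tfrac{3}{4}$ constant in the Barnes $G$-function asymptotic, and extracting it requires keeping enough precision in the Stirling expansions of all three pieces simultaneously. The main obstacle I expect is the bookkeeping of these remainders—any imprecision in the Barnes constant or in $\log\Gamma(M)$ would either spoil the exact coefficient $-n/4$ or leave a remainder of the same order as the leading term. A secondary point is to confirm that discarding the $k>M$ tail by bounding each factor by $1$ is legitimate, which follows from the fact that $\Prob\{(R_k^{(n)})^2\le r^2\}$ decays rapidly once $k^{n}$ substantially exceeds $r^{2}$.
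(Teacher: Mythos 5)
Your proposal is essentially the paper's own proof: the paper also factorizes the hole probability via Theorem~\ref{TheoremMainResult1}, applies Proposition~\ref{PropositionMarkovInequality} with $\varphi(x)=x^{\alpha}$ to each factor, computes the moment $\left(\Gamma(k+\alpha)/\Gamma(k)\right)^{n}$ from formula~(\ref{MainIntegralFormula}), chooses the exponent $\alpha=r^{2/n}-k$ (your $s_k=M-k$), keeps only the factors with $k\le r^{2/n}$, and extracts the constant $-n/4$ from the same Stirling/Euler--Maclaurin cancellation; your reformulation in terms of $M\log\Gamma(M)-\sum_{k\le M}\log\Gamma(k)$ and the Barnes constant $-\tfrac34$ is just different bookkeeping for the identical computation.

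One caveat: the expansions you actually state, $\log\Gamma(M)=M\log M-M+O(\log M)$ and $\sum_{k=1}^{M}\log\Gamma(k)=\tfrac{M^{2}}{2}\log M-\tfrac34M^{2}+O(M\log M)$, only deliver a remainder of order $O(r^{2/n}\log r)$ after the $M^{2}\log r$ cancellation, and such a term cannot be ``absorbed into the $O(\log r)$'' claimed in the proposition. To get the stated $O(\log r)$ you must retain one more order in both expansions (the $-\tfrac12\log M$ term of Stirling and the $\tfrac{M}{2}\log(2\pi)$ term of the Barnes asymptotic) and verify that the resulting $O(M\log M)$ and $O(M)$ contributions cancel --- which they do, as the paper's version of the computation makes visible by working directly with $\sum_k\log(k/r^{2/n})$ and $\sum_k k\log(k/r^{2/n})$. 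This does not affect the application to Theorem~\ref{TheoremGapProbabilities}(B), where $O(r^{2/n}\log r)$ suffices, but it is needed to prove Proposition~\ref{PropositionUpperBound} as stated.
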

\begin{proof}
Let $\alpha\geq 0$, $A=(r^2,\infty)$, and $\varphi(x)=x^{\alpha}$. Then by the Markov inequality (Proposition \ref{PropositionMarkovInequality})
we have
$$
\Prob\left\{\left(R_k^{(n)}\right)^2>r^2\right\}\leq\frac{\int\limits_{0}^{\infty}t^{\alpha}G_{0,n}^{n,0}(t|k-1,k-1,\ldots,k-1)dt}{(\Gamma(k))^n(r^2)^{\alpha}}.
$$
(we have used the fact that the random variable $\left(R_k^{(n)}\right)^2$ has the density function given by formula (\ref{equation g_i(x)}), see Theorem \ref{TheoremmainResult2}).
By formula (\ref{MainIntegralFormula})
$$
\int\limits_{0}^{\infty}t^{\alpha}G_{0,n}^{n,0}(t|k-1,k-1,\ldots,k-1)dt=\left(\Gamma(k+\alpha)\right)^n.
$$
Therefore,
$$
\Prob\left\{\left(R_k^{(n)}\right)^2>r^2\right\}\leq\frac{\left(\Gamma(k+\alpha)\right)^n}{(\Gamma(k))^n(r^2)^{\alpha}}.
$$
Next we use the following well known inequality (see, for example, Digital Library of Mathematical Functions \cite{Digital}, $\S$ 5.6)
\begin{equation}\label{GammaInequality}
(2\pi)^{\frac{1}{2}}\exp\left\{-z+(z-\frac{1}{2})\log z\right\}\leq\Gamma(z)\leq(2\pi)^{\frac{1}{2}}\exp\left\{-z+(z-\frac{1}{2})\log(z)+\frac{1}{12 z}\right\},
\end{equation}
as $z\geq 1$,
to obtain
\begin{equation}
\begin{split}
&\Prob\left\{\left(R_k^{(n)}\right)^2>r^2\right\}\\
&\leq\exp\left\{-n\alpha-\alpha\log(r^2)+n\left(k-\frac{1}{2}+\alpha\right)\log\left(1+\frac{\alpha}{k}\right)+\alpha n\log(k)+\frac{n}{12(k+\alpha)}\right\}.
\end{split}
\nonumber
\end{equation}
Set $\alpha=r^{\frac{2}{n}}-k$. Then we can rewrite the inequality above as
\begin{equation}
\begin{split}
\Prob\left\{\left(R_k^{(n)}\right)^2>r^2\right\}&\leq \exp\left\{-n(r^{\frac{2}{n}}-k)+\frac{n}{2}\log\frac{k}{r^{\frac{2}{n}}}-nk\log\frac{k}{r^{\frac{2}{n}}}+\frac{n}{12r^{\frac{2}{n}}}\right\}.
\end{split}
\nonumber
\end{equation}
This gives
\begin{equation}
\begin{split}
&\prod\limits_{k=1}^{\infty}\Prob\left\{\left(R_k^{(n)}\right)^2>r^2\right\}\leq \prod\limits_{k=1}^{r^{\frac{2}{n}}}\Prob\left\{\left(R_k^{(n)}\right)^2>r^2\right\}\\
&\leq\exp\left\{-nr^{\frac{4}{n}}+n\frac{r^{\frac{2}{n}}(r^{\frac{2}{n}}+1)}{2}+\frac{n}{2}\sum\limits_{k=1}^{r^{\frac{2}{n}}}
\log\left(\frac{k}{r^{\frac{2}{n}}}\right)
-n\sum\limits_{k=1}^{r^{\frac{2}{n}}}k\log\left(\frac{k}{r^{\frac{2}{n}}}\right)
+\frac{n}{12}\right\},
\end{split}
\nonumber
\end{equation}
where $r^{\frac{2}{n}}$ is considered as an integer (this assumption should not affect our estimate).
The sums in the exponent can be estimated using the Euler-MacLaurin sum formula. We write it in the form
\begin{equation}\label{EulerMacLaurin}
\sum\limits_{k=1}^{L}f(k)=\int\limits_{1}^{L}f(t)dt+\frac{1}{2}\left(f(L)+f(1)\right)+O\left(f'(L)\right).
\end{equation}
This formula gives
$$
\sum\limits_{k=1}^{r^{\frac{2}{n}}}
\log\left(\frac{k}{r^{\frac{2}{n}}}\right)=-r^{\frac{2}{n}}+1+\frac{1}{n}\log(r)+O\left(r^{-\frac{2}{n}}\right),
$$
and
$$
\sum\limits_{k=1}^{r^{\frac{2}{n}}}k\log\left(\frac{k}{r^{\frac{2}{n}}}\right)=
-\frac{r^{\frac{4}{n}}}{4}+O\left(1\right),
$$
as $r\rightarrow\infty$.
 Using these estimates we find
\begin{equation}
\begin{split}
\Prob\left\{\mathcal{N}_{GG}^{(n)}(r;N=\infty)=0\right\}&=\prod\limits_{k=1}^{\infty}\Prob\left\{\left(R_k^{(n)}\right)^2>r^2\right\}\\
&\leq
\exp\left\{-\frac{n}{4}r^{\frac{4}{n}}+O\left(\log(r)\right)\right\},
\end{split}
\nonumber
\end{equation}
as $r\rightarrow\infty$.
\end{proof}

\subsubsection{A lower bound for the hole probability}\label{SectionLowerBound}
\begin{prop}\label{PropositionLowerBound}
We have
$$
\Prob\left\{\mathcal{N}_{GG}^{(n)}(r;N=\infty)=0\right\}\geq\exp\left\{-\frac{n}{4}r^{\frac{4}{n}}+O(r^{\frac{2}{n}}\log(r))\right\},
$$
as $r\rightarrow\infty$.
\end{prop}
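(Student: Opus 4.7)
The starting point is the factorization
$$\Prob\left\{\mathcal{N}_{GG}^{(n)}(r;N=\infty)=0\right\}=\prod_{k=1}^{\infty}\Prob\left\{\left(R_k^{(n)}\right)^2>r^2\right\}$$
provided by Theorem \ref{TheoremMainResult1}, together with the representation of $\left(R_k^{(n)}\right)^2$ as a product $Y_1^{(k)}\cdots Y_n^{(k)}$ of $n$ independent $\GAMMA(k,1)$ variables. The plan is to split the product at $K=\lceil r^{\frac{2}{n}}\rceil$, which is where the mean $k^n$ of $\left(R_k^{(n)}\right)^2$ first exceeds $r^2$, and to extract the leading asymptotics $-\frac{n}{4}r^{\frac{4}{n}}$ from the range $k\leq K$ while showing that the tail $k>K$ contributes only an $O(r^{\frac{2}{n}}\log r)$ correction.

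For $k\leq K$, the elementary inclusion
$$\left\{Y_1^{(k)}>r^{\frac{2}{n}},\ldots,Y_n^{(k)}>r^{\frac{2}{n}}\right\}\subset\left\{\left(R_k^{(n)}\right)^2>r^2\right\}$$
combined with independence reduces the problem to lower-bounding $\Prob\left\{\GAMMA(k,1)>r^{\frac{2}{n}}\right\}=\Gamma(k,r^{\frac{2}{n}})/\Gamma(k)$. Since $t^{k-1}$ is nondecreasing on $[K,\infty)$ for $k\geq 1$, the incomplete Gamma admits the crude but sufficient lower bound $\Gamma(k,K)\geq K^{k-1}e^{-K}$. Substituting, taking logs, and summing produces
$$\sum_{k=1}^{K}\left[2(k-1)\log r-nr^{\frac{2}{n}}-n\log\Gamma(k)\right].$$
The arithmetic sums $\sum(k-1)$ and $\sum 1$ are immediate. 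For $\sum_{k=1}^{K}\log\Gamma(k)$ I would apply the Euler-MacLaurin formula \eqref{EulerMacLaurin} combined with Stirling, extracting the leading behaviour $\tfrac{1}{2}K^2\log K-\tfrac{3}{4}K^2+O(K\log K)$. The $r^{\frac{4}{n}}\log r$ terms then cancel exactly between the first and third sums, leaving $-\frac{n}{4}r^{\frac{4}{n}}+O(r^{\frac{2}{n}}\log r)$, which matches the constant in Proposition \ref{PropositionUpperBound}.

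For the tail $k>K$ I will show that $\sum_{k>K}\Prob\left\{\left(R_k^{(n)}\right)^2\leq r^2\right\}=O(\sqrt{K})$, which via $-\log(1-p)\leq 2p$ for $p\leq 1/2$ contributes only $O(r^{\frac{1}{n}})$ to the log of the hole probability, well within the advertised error. The union bound gives $\Prob\left\{\left(R_k^{(n)}\right)^2\leq r^2\right\}\leq n\,\Prob\left\{\GAMMA(k,1)\leq r^{\frac{2}{n}}\right\}$, and a standard Chernoff estimate applied to $\mathbb{E}[e^{-tY}]=(1+t)^{-k}$ at the optimizer $1+t=k/K$ yields $\Prob\{\GAMMA(k,1)\leq K\}\leq\exp(-(k-K)^2/(2K)+O((k-K)^3/K^2))$ in the regime $K<k\leq 2K$, while for $k\geq 2K$ the same bound decays at a geometric rate $e^{-ck}$. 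Summing the Gaussian-type tail in $j=k-K$ gives $O(\sqrt{K})$, and the bulk $k\geq 2K$ is negligible. The main obstacle is the bookkeeping in the second step: the coefficient $-n/4$ emerges only after a three-way cancellation between $2(k-1)\log r$, $-nKr^{\frac{2}{n}}$, and $-n\log\Gamma(k)$, so one must use the incomplete Gamma lower bound $K^{k-1}e^{-K}$ rather than a cruder surrogate, and track the subleading $\tfrac{3}{4}K^2$ term in Stirling with the right sign to avoid losing the constant.
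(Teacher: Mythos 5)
Your proposal is correct in outline but takes a genuinely different, more self-contained route than the paper. The paper uses exactly your first step --- the inclusion $\bigl\{Y_1^{(k)}>r^{2/n},\ldots,Y_n^{(k)}>r^{2/n}\bigr\}\subset\bigl\{(R_k^{(n)})^2>r^2\bigr\}$, which by independence gives $\Prob\{(R_k^{(n)})^2>r^2\}\geq\bigl[\Prob\{(R_k^{(n=1)})^2>r^{2/n}\}\bigr]^n$ --- but then stops: it recognizes the right-hand side as the $n$-th power of the $k$-th factor of the classical ($n=1$) hole probability at radius $r^{1/n}$ and simply imports the known bound $\prod_k\Prob\{(R_k^{(n=1)})^2>s^2\}\geq\exp\{-\tfrac14 s^4+O(s^2\log s)\}$ from Hough--Krishnapur--Peres--Vir\'ag, Section 7.2, with $s=r^{1/n}$, so the whole proof is a few lines. (Incidentally, the paper justifies its inequality by asserting $(R_k^{(n)})^2\overset{d}{=}\bigl[(R_k^{(n=1)})^2\bigr]^n$, which is false as a distributional identity --- the two sides have different means --- but the inequality itself is correct for exactly the reason you give.) You instead re-derive the $n=1$ input from scratch: the bound $\Gamma(k,K)\geq K^{k-1}e^{-K}$, the sum $\sum_{k\leq K}\log\Gamma(k)=\tfrac12K^2\log K-\tfrac34K^2+O(K\log K)$, and the three-way cancellation do produce $-\tfrac{n}{4}r^{4/n}+O(r^{2/n}\log r)$ (the bookkeeping checks out), and your Chernoff tail estimate for $k>K$ is essentially the content of the cited section of \cite{Hough}. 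What your route buys is independence from the external reference; what it costs is length and one loose end: for $k$ just above $K$ the union bound gives only $p_k\leq n\,\Prob\{\GAMMA(k,1)\leq K\}\approx n/2$, which violates the hypothesis $p_k\leq 1/2$ of $-\log(1-p)\leq 2p$ once $n\geq2$. Handle the window $K<k\leq K+C\sqrt{K\log K}$ separately, e.g.\ by noting that there each factor satisfies $\Prob\{(R_k^{(n)})^2>r^2\}\geq\bigl[\Prob\{\GAMMA(k,1)>k\}\bigr]^n\geq c_n>0$, so these $O(\sqrt{K\log K})$ factors contribute only $O(r^{1/n}\log r)$ to the logarithm, well within the stated error term.
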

\begin{proof}
It is known that
\begin{equation}\label{LowerBoundn=1}
\prod\limits_{k=1}^{\infty}\Prob\left\{\left(R_k^{(n=1)}\right)^2>r^2\right\}\geq\exp\left\{-\frac{1}{4}r^4+O(r^2\log(r))\right\},
\end{equation}
see, for example, Hough, Krishnapur, Peres and Vir$\acute{\mbox{a}}$g                                                                                       \cite{Hough}, Section 7.2.
Consider the set of the random variables  $\{R_1^{(n)},R_2^{(n)},\ldots \}$. We  know that the random variables $R_k^{(n)}$ are independent, and
$\left(R_k^{(n)}\right)^2$ has the same distribution as the product of $n$ independent and identically distributed gamma variables $\GAMMA(k,1)$.
In particular, the random variable $\left(R_k^{(n=1)}\right)^2$ is itself the gamma variable $\GAMMA(k,1)$. We conclude that
the random variable $\left(R_k^{(n)}\right)^2$ has the same distribution as the random variable $\left[\left(R_k^{(n=1)}\right)^2\right]^n$.
This immediately implies
\begin{equation}\label{LowerBoundInequality}
\Prob\left\{\left(R_k^{(n)}\right)^2>r^2\right\}\geq\left[\Prob\left\{\left(R_k^{(n=1)}\right)^2>r^{\frac{2}{n}}\right\}\right]^n.
\end{equation}
Using inequalities (\ref{LowerBoundn=1}) and (\ref{LowerBoundInequality}) we find
\begin{equation}\label{LowerBoundn}
\begin{split}
\Prob\left\{\mathcal{N}_{GG}^{(n)}(r;N=\infty)=0\right\}&=\prod\limits_{k=1}^{\infty}\Prob\left\{\left(R_k^{(n)}\right)^2>r^2\right\}\\
&\geq\exp\left\{
-\frac{n}{4}r^{\frac{4}{n}}+O(r^{\frac{2}{n}}\log(r))\right\},
\end{split}
\nonumber
\end{equation}
as $r\rightarrow\infty$.
\end{proof}
\subsubsection{Proof of Theorem \ref{TheoremGapProbabilities} (B)}
The statement of Theorem  \ref{TheoremGapProbabilities} (B) follows from Proposition \ref{PropositionUpperBound} and Proposition \ref{PropositionLowerBound}.
\qed
\section{Proof of Theorem \ref{TheoremOvercrowdingProblem}}
Recall that the set of absolute values of points of the generalized Ginibre ensemble has the same distribution as the set
$\{R_1^{(n)},R_2^{(n)},\ldots\}$, where $R_{k}^{(n)}$ are independent, and $\left(R_{k}^{(n)}\right)^2$ has the same distribution as the product of $n$ independent and identically
distributed  gamma variables having density function $\GAMMA(k,1)$, see Theorem \ref{TheoremMainResult1}.
This implies
$$
\left(R_{k}^{(n)}\right)^2\overset{d}{=}\left(\xi^{(1)}_1+\ldots+\xi_k^{(1)}\right)\cdot\ldots\cdot\left(\xi^{(n)}_1+\ldots+\xi_k^{(n)}\right),
$$
where $\xi_i^{(j)}$, $1\leq i\leq k$, $1\leq j\leq n$ are i.i.d. exponential random variables with mean $1$.
Therefore we can write
\begin{equation}
\begin{split}
&\Prob\left\{\left(R_k^{(n)}\right)^2<r^2\right\}
\geq\Prob\left\{\xi_1^{(1)}+\ldots+\xi_k^{(1)}<r^{\frac{2}{n}}\right\}
\cdot\ldots\cdot\Prob\left\{\xi_1^{(n)}+\ldots+\xi_k^{(n)}<r^{\frac{2}{n}}\right\}\\
&\geq\prod\limits_{j=1}^k\Prob\left\{\xi_j^{(1)}<\frac{r^{\frac{2}{n}}}{k}\right\}\cdot\ldots\cdot\prod\limits_{j=1}^k\Prob\left\{\xi_j^{(n)}<\frac{r^{\frac{2}{n}}}{k}\right\}.
\end{split}
\nonumber
\end{equation}
For an exponential random variable $\xi$  with mean $1$ we have
$$
\Prob\left\{\xi<x\right\}\geq\frac{x}{2},\;\;\;0<x<1.
$$
This gives
$$
\Prob\left\{\left(R_k^{(n)}\right)^2<r^2\right\}\geq\left(\frac{r^{\frac{2}{n}}}{2k}\right)^{nk},
$$
and we obtain
\begin{equation}
\begin{split}
&\Prob\left\{{\mathcal{N}}_{GG}^{(n)}(r;N=\infty)\geq m\right\}\geq\prod\limits_{k=1}^m\Prob\left\{\left(R_k^{(n)}\right)^2<r^2\right\}\\
&\geq\prod\limits_{k=1}^m\frac{r^{2k}}{2^{nk}k^{nk}}=\frac{r^{m(m+1)}}{2^{n\frac{m(m+1)}{2}}}\exp\left\{-n\sum\limits_{k=1}^mk\log(k)\right\}.
\end{split}
\nonumber
\end{equation}
The Euler-MacLaurin formula (equation (\ref{EulerMacLaurin})) gives
\begin{equation}\label{KLOGKESTIMATE}
\sum\limits_{k=1}^mk\log(k)=\frac{m(m+1)}{2}\log(m)-\frac{m^2}{4}+O(\log(m)),
\end{equation}
as $m\rightarrow\infty$.
We use this estimate to get a lower bound  for $\Prob\left\{{N}_{GG}^{(n)}(r;N=\infty)\geq m\right\}$, namely
\begin{equation}\label{OverCrowdingLowerBound}
\Prob\left\{{\mathcal{N}}_{GG}^{(n)}(r;N=\infty)\geq m\right\}\geq \exp\left\{-\frac{1}{2}nm^2\log(m)(1+o(1))\right\},
\end{equation}
as $m\rightarrow\infty$.
To obtain an upper bound for $\Prob\left\{{\mathcal{N}}_{GG}^{(n)}(r;N=\infty)\geq m\right\}$ we use the Markov inequality (Proposition \ref{PropositionMarkovInequality}) with
$A=\{0,r^2\}$, $\varphi(x)=x^{-\alpha}$, $\alpha\geq 0$. This gives
\begin{equation}
\begin{split}
\Prob\left\{(R_k^{(n)})^2<r^2\right\}&\leq
(r^2)^{\alpha}\frac{\int\limits_{0}^{\infty}t^{-\alpha}G_{0,n}^{n,0}(t|k-1,k-1,\ldots,k-1)dt}{\left(\Gamma(k)\right)^{n}}\\
&=
(r^2)^{\alpha}\frac{\left(\Gamma(k-\alpha)\right)^n}{(\Gamma(k))^n},
\end{split}
\end{equation}
where we have used formula (\ref{MainIntegralFormula}). By inequality (\ref{GammaInequality})
\begin{equation}
\begin{split}
&\Prob\left\{\left(R_k^{(n)}\right)^2<r^2\right\}\\
&\leq \exp\left\{n\alpha+\alpha\log(r^2)+n(k-\frac{1}{2}-\alpha)\log(1-\frac{\alpha}{k})-\alpha n\log(k)+\frac{n}{12(k-\alpha)}\right\}.
\end{split}
\nonumber
\end{equation}
Choosing $\alpha=k-\frac{1}{2}$, we obtain
\begin{equation}\label{TheUpperBoundInequality}
\begin{split}
&\Prob\left\{\left(R_k^{(n)}\right)^2<r^2\right\}\\
&\leq\exp\left\{\left(k-\frac{1}{2}\right)\left(n+\log(r^2)-n\log(k)\right)+\frac{n}{6}\right\}.
\end{split}
\end{equation}
In addition, by simple probabilistic arguments
\begin{equation}\label{ProbabilisticInequality}
\begin{split}
&\Prob\left\{{\mathcal{N}}_{GG}^{(n)}(r;N=\infty)\geq m\right\}\leq\Prob\left\{\sum\limits_{k=1}^{m^2}\mathbb{I}\left(\left(R_k^{(n)}\right)^2<r^2\right)\geq m\right\}\\
&+\sum\limits_{k=m^2+1}^{\infty}\Prob\left\{\left(R_k^{(n)}\right)^2<r^2\right\}.
\end{split}
\end{equation}
(Here $\mathbb{I}\left(.\right)$ stands for the characteristic function of a set).  The second term on the right hand side of the inequality above can be estimated as follows.
By inequality (\ref{TheUpperBoundInequality})
\begin{equation}
\begin{split}
\Prob\left\{\left(R_k^{(n)}\right)^2<r^2\right\}\leq\exp\left\{-nk\log(k)(1+o(1))\right\},
\nonumber
\end{split}
\end{equation}
as $k\rightarrow\infty$. Therefore,
$$
\sum\limits_{k=m^2+1}^{\infty}\Prob\left\{\left(R_k^{(n)}\right)^2<r^2\right\}\leq \exp\left\{-nm^2\log(m^2)(1+o(1))\right\},
$$
as $m\rightarrow\infty$. Now let us estimate the first term on the right hand side of  inequality (\ref{ProbabilisticInequality}).
We have
$$
\Prob\left\{\sum\limits_{k=1}^{m^2}\mathbb{I}\left(\left(R_k^{(n)}\right)^2<r^2\right)\geq m\right\}\leq\left(\begin{array}{c}
                                                                                                   m^2 \\
                                                                                                   m
                                                                                                 \end{array}
\right)\prod\limits_{k=1}^m\Prob\left\{\left(R_k^{(n)}\right)^2<r^2\right\}.
$$
Using $\left(\begin{array}{c}
                                                                                                   m^2 \\
                                                                                                   m
                                                                                                 \end{array}
\right)<m^{2m}
$, inequality (\ref{TheUpperBoundInequality}), and equation (\ref{KLOGKESTIMATE}) we obtain
\begin{equation}\label{UpperOvercrowdingEstimate}
\begin{split}
&\Prob\left\{\sum\limits_{k=1}^{m^2}\mathbb{I}\left(\left(R_k^{(n)}\right)^2<r^2\right)\geq m\right\}\\
&\leq \exp\left\{2m\log(m)+\frac{nm}{6}+\sum\limits_{k=1}^m(k-\frac{1}{2})\left(n+\log(r^2)-n\log(k)\right)\right\}\\
&=\exp\left\{-\frac{nm^2\log(m)}{2}(1+o(1))\right\},
\end{split}
\end{equation}
as $m\rightarrow\infty$.
The statement of Theorem \ref{TheoremOvercrowdingProblem} follows from inequalities (\ref{OverCrowdingLowerBound}) and (\ref{UpperOvercrowdingEstimate}).
\qed

\end{document}